\theoremstyle{break}
\newcommand{\keywords}[1]{\par\addvspace\baselineskip
\noindent\keywordname\enspace\ignorespaces#1}
\tikzset{
  treenode/.style = {align=center, inner sep=0pt, text centered,
    font=\sffamily},
  arn_n/.style = {treenode, circle, black,  draw=black,
     text width=2em},
  arn_r/.style = {treenode, circle, red, draw=red,
    text width=2em, very thick},
  arn_x/.style = {treenode, rectangle, draw=black,
    minimum width=0.5em, minimum height=0.5em}
}
\newcommand{\alert}{\textcolor{red}}
\begin{document}

\mainmatter  


\title{Inherit Differential Privacy in Distributed Setting:\\ Multiparty Randomized Function Computation}

\author{Genqiang Wu\inst{1,2} \and Yeping He\inst{1} \and Jingzheng Wu\inst{1} \and Xianyao Xia\inst{1}}

\institute{NFS, Institute of Software Chinese Academy of Sciences,
Beijing 100190, China \\
\email{genqiang80@gmail.com, \{yeping,jingzheng,xianyao\}@nfs.iscas.ac.cn}
\and
SIE, Lanzhou University of Finance and Economics, Lanzhou 730020, China
}
\authorrunning{Genqiang Wu et al.}
\titlerunning{Inherit Differential Privacy in Distributed Setting}

\maketitle

\begin{abstract}
How to achieve differential privacy in the distributed setting, where the dataset is distributed among the distrustful parties, is an important problem. We consider in what condition can a protocol inherit the differential privacy property of a function it computes. The heart of the problem is the secure multiparty computation of randomized function. A notion \emph{obliviousness} is introduced, which captures the key security problems when computing a randomized function from a deterministic one in the distributed setting. By this observation, a sufficient and necessary condition about computing a randomized function from a deterministic one is given. The above result can not only be used to determine whether a protocol computing differentially private function is secure, but also be used to construct secure one. Then we prove that the differential privacy property of a function can be inherited by the protocol computing it if the protocol privately computes it. A composition theorem of differentially private protocols is also presented. We also construct some protocols to generate random variate in the distributed setting, such as the uniform random variates and the inversion method. By using these fundamental protocols, we construct protocols of the Gaussian mechanism, the Laplace mechanism and the Exponential mechanism. Importantly, all these protocols satisfy obliviousness and so can be proved to be secure in a simulation based manner. We also provide a complexity bound of computing randomized function in the distribute setting. Finally, to show that our results are fundamental and powerful to multiparty differential privacy, we construct a differentially private empirical risk minimization protocol.
\keywords{multiparty differential privacy, random variate generation, secure multiparty computation, randomized function, obliviousness}
\end{abstract}

\section{Introduction} \label{sec-introduction}

Nowadays, a lot of personal information are collected and stored in many databases. Each database is owned by a particular autonomous entity, e.g., financial data by banks, medical data by hospitals, online shopping data by e-commerce companies, online searching records by search engine companies, income data by tax agencies. Some entities may want to mine useful information among these databases. For example, insurance companies may want to analyze the insurance risk of some group by mining both the bank's database and the hospital's database, or several banks may want to aggregate their databases to estimate the loan risk in some area, or, more generally, one may want to learn a classifier among these private databases \cite{DBLP:conf/nips/PathakRR10}. However, due to privacy consideration, data integrating or data mining among these databases should be conducted in a privacy-preserving way: First, one must perform computations on database that must be kept private and there is no single entity that is allowed to see all the databases on which the analysis is run; Second, it is not a priori clear whether the analysis results contain sensitive information traceable back to particular individuals \cite{DBLP:conf/kdd/GantaKS08,DBLP:conf/sp/NarayananS08}. The first privacy problem is the research field of secure MultiParty Computation (MPC) \cite{DBLP:books/cu/Goldreich2004}. However, since standard MPC does not analyze and prevent what is (implicitly) leaked by the analysis results \cite{DBLP:conf/crypto/BeimelNO08,DBLP:conf/pldi/MardzielHKS12}, the second privacy problem can not be treated by MPC. Fortunately, the second privacy problem could be analyzed by differential privacy (DP) \cite{DBLP:journals/fttcs/DworkR14,DBLP:conf/icalp/Dwork06}, which is a mathematically rigorous privacy model that has recently received a significant amount of research attention for its robustness to known attacks, such as those involving side information \cite{DBLP:conf/kdd/GantaKS08,DBLP:conf/sp/NarayananS08}. Therefore, solving the above privacy problems needs the combination of MPC and DP as a tool.


There is a misunderstanding that the above problem can easily be solved without using MPC: Each party first locally analyzes and perturbs the local data using the appropriate differentially private algorithm and then outputs the result; These results are then synthesized to obtain the final result. Obviously, the final result satisfies differential privacy. However, the above method will either add more noise to the final result, such as in the noise mechanism \cite{DBLP:conf/icalp/Dwork06}, or need redesign of the related algorithm, such as in the exponential mechanism \cite{DBLP:conf/focs/McSherryT07}, which would be a more hard work.



We now present the considered problem in a more formal way. Let a dataset $x=(x_1, \ldots, x_n)$ be distributed among the mutually distrustful parties $P_1, \ldots, P_n$, where $x_i$ is owned by $P_i$. We call the above dataset owning condition by \emph{the distributed setting}. The parties want to implement differentially private analyses in the distributed setting by the following way: First choose what to compute, i.e., a differentially private function $\mathcal M(x)$; Then decide how to compute it, i.e., construct an MPC protocol to compute $\mathcal M(x)$. In the paper we only treat the second step. That is, we assume that there has been a differentially private algorithm $\mathcal M(x)$ in the client-server setting. Our task is to construct an MPC protocol $\pi$ to compute $\mathcal M(x)$ in the distributed setting. Furthermore, it is vital that $\pi$ should `inherit' the differential privacy property of $\mathcal M(x)$. That is, in executing $\pi$, each party's view (or each subgroup of the parties' views) should be differentially private to other parties' private data. However, constructing such protocol is challenging. To see that, we consider two examples appeared in the related works to construct differentially private protocols.

\begin{example}[Gaussian mechanism] \label{example1}
The party $P_i$ has the math score list $x_i$ of Class $i$ for $i= 1,2$. $P_1,P_2$ are willing to count the total number of the students whose score $\ge 60$ in $x_1$ and $x_2$, while letting the score list one party owns to be secret to the other and letting the output $f(x_1,x_2)$ satisfies differential privacy, where $f$ is the counting function. We use Gaussian mechanism to achieve differential privacy, i.e., adding Gaussian noise to $f(x_1,x_2)$. Note that the sensitivity of $f$ is $\Delta f = 1$. Therefore, we can add a random number $N \sim \mathcal N(0, \sigma^2)$ to achieve $(\epsilon,\delta)$-differential privacy, where $\sigma >\sqrt{2 \ln 1.25/\delta}/\epsilon$ \cite[page 471]{DBLP:journals/fttcs/DworkR14}. There are two intuitive protocols to achieve the task:
    \begin{enumerate}
    \item Each $P_i$ generates a random number $N_i \sim \mathcal N(0, \sigma^2/2)$ and computes $o_i = f(x_i)+N_i$ locally. $P_1, P_2$ then compute $o_1 +o_2$ using an MPC protocol and output the result $o$. Note that $o = f(x_1, x_2)+(N_1+N_2)$ since $f(x_1,x_2) = f(x_1) +f(x_2)$ and that $(N_1+N_2)\sim \mathcal N(0, \sigma^2)$ due to the infinitely divisibility of Gaussian noise.
    \item Each $P_i$ generates a random number $N_i' \sim \mathcal N(0, \sigma^2)$ locally. $P_1, P_2$ then compute and output $ o = f(x_1) +f(x_2) + \mathrm{LT}(N_1',N_2')$ using an MPC protocol, where $\mathrm{LT}(N_1',N_2')$ outputs the smaller one in $N_1', N_2'$.
    \end{enumerate}
\end{example}

Intuitively, both of the two protocols in Example \ref{example1} satisfy $(\epsilon,\delta)$-differential privacy since both of them add noises drawn from $\mathcal N(0, \sigma^2)$ to $f(x_1,x_2)$. However, to the first protocol, if $P_1$ computes $o - N_1$ it obtains the vale of $f(x_1, x_2) + N_2$. Since $N_2 \sim \mathcal N(0, \sigma^2/2)$ but not $N_2 \sim \mathcal N(0, \sigma^2)$, $P_1$ obtains an output not satisfying $(\epsilon,\delta)$-differential privacy. To the second protocol, either $N_1' = \mathrm{LT}(N_1',N_2')$ or $N_2' = \mathrm{LT}(N_1',N_2')$. Without loss of generality, assuming $N_1' = \mathrm{LT}(N_1',N_2')$, $P_1$ can then compute the value of $o- N_1'$ to obtain $f(x_1, x_2)$, which obviously violates differential privacy. A similar protocol, which has the similar drawback as the second protocol, is used to generate Laplace noise in the distributed setting in \cite{DBLP:conf/nips/PathakRR10}. Therefore, both of the two protocols in Example \ref{example1} do not inherit the $(\epsilon,\delta)$-differential privacy property of the function they compute.

\begin{example}[Laplace mechanism] \label{example3}
The same as Example \ref{example1}, $P_1, P_2$ want to output $f(x_1,x_2)$. In this time, they use Laplace mechanism to achieve differential privacy, i.e., adding Laplace noise to $f(x_1,x_2)$. Since $\Delta f = 1$, they can add a random number $N \sim \mathrm{Lap}(1/\epsilon)$ to achieve $\epsilon$-differential privacy. They construct a protocol as follows: Each party $P_i$ generates two random numbers $Y_{i1}, Y_{i2}$ drawn from $\mathcal N(0, 1/\sqrt{2\epsilon})$ locally. The parties then use an MPC protocol to compute $o=f(x_1,x_2)+N$ and output $o$, where $N \gets \sum_i(Y_{i1}^2-Y_{i2}^2)$.
\end{example}

The above protocol is shown in \cite{DBLP:conf/sigmod/RastogiN10,DBLP:journals/tdsc/MohammedAFD14,7286780}. However, we conclude that it does not inherit the $\epsilon$-differential privacy property of the function it computes. The reason is that $P_1$ can obtain the value of $f(x_1,x_2)+(Y_{21}^2-Y_{22}^2)$ by subtracting $(Y_{11}^2-Y_{12}^2)$ from $o$. However, since the distribution function of $(Y_{21}^2-Y_{22}^2)$ is not $\mathrm{Lap}(1/\epsilon)$ the value of $f(x_1,x_2)+(Y_{21}^2-Y_{22}^2)$ will not satisfy $\epsilon$-differential privacy. 

From Example \ref{example1} and Example \ref{example3} we see that it is difficult to construct a protocol that can inherit the differential privacy property of the function it computes. The crux of the difficulty is that differentially private function is a kind of randomized function, whose output is a random element drawn from a prescribed distribution function (please see Definition \ref{definition-randomized} in Section \ref{sub-sec-dp}) and that the result about computing randomized function in MPC is rare. In the paper we will develop some theoretical results about computing randomized function in the distributed setting and then treat the above inheritance problem. Note that differentially private function and random variate are two kinds of randomized function: with constant inputs for the second one.

\subsection{Contribution}

Our contributions are as follows.


First, we provide a special security definition of computing randomized function in the distributed setting, in which a new notion \emph{obliviousness} is introduced. Obliviousness captures the key security problems when computing a randomized function from a deterministic one. By this observation, we provide a sufficient and necessary condition (Theorem \ref{theorem-obliviousness-mpc-security}) about computing a randomized function from a deterministic one. The above result can not only be used to determine whether a protocol computing a randomized function (and therefore computing a differentially private function) is secure, but also be used to construct secure one. To the best of our knowledge, ours (Theorem \ref{theorem-obliviousness-mpc-security}) is the first to provide a sufficient and necessary condition about this problem.




Second, we prove that a differentially private algorithm can preserve differential privacy property in the distributed setting if the protocol computing it is secure (Theorem \ref{theorem-mdp}), i.e., the \emph{inheritance} problem. We also introduce the composition theorem of differential privacy in the distributed setting (Theorem \ref{theorem-composition-DP}). To the best of our knowledge, the paper is the first to present these results in differential privacy.

Third, we construct some fundamental protocols to generate random variate in the distributed setting, such as Protocol \ref{protocol-Bernoulli-distributed} and Protocol \ref{protocol-inversion-distributed}. By using these fundamental protocols, we construct protocols of the Gaussian mechanism (Protocol \ref{protocol-Gaussian-mechanism-distributed}), the Laplace mechanism (Protocol \ref{protocol-laplace-mechanism}) and the Exponential mechanism (Protocol \ref{protocol-discrete-exponential-distributed} and Protocol \ref{protocol-gibbs-sampler}). To the best of our knowledge, Protocol \ref{protocol-gibbs-sampler} is the first exponential mechanism to treat high-dimensional continuous range in the distributed setting. Importantly, all these protocols satisfy obliviousness and, therefore, can be proved to be secure in a simulation based manner by using the conclusion of Theorem \ref{theorem-obliviousness-mpc-security}. Furthermore, The later four protocols inherit the differential privacy property of the function they compute.

Forth, we provide a complexity bound of multiparty computation of randomized function, which show the intrinsic complexity of the method the paper use to achieve obliviousness, i.e., bits $\mathrm{XOR}$. 

Finally, to show that the protocols in Section \ref{sec-random-generation-protocols} are powerful and fundamental, we constructed a differentially private empirical risk minimization (ERM) protocol in the distributed setting by using the protocols in Section \ref{sec-random-generation-protocols}.


\subsection{Outline}

The rest of the paper is organized as follows: Section \ref{sec-preliminary} briefly reviews the Shamir's secret sharing scheme, differential privacy definition and non-uniform random variate generation. Section \ref{sec-randomized-security} discusses the security of the protocol computing randomized function. Section \ref{sec-mdp} mainly discusses how can a protocol inherit the differential privacy property of a function it computes. The composition theorem of differentially private protocols is also given. Section \ref{sec-random-generation-protocols} constructs some fundamental protocols to generate random variates in the distributed setting. It also provides the Gaussian mechanism, the Laplace mechanism and the exponential mechanism in the distributed setting. Section \ref{sec-application} applies the protocols in Section \ref{sec-random-generation-protocols} to solve the empirical risk minimization problem. Section \ref{sec-related-work} presents related works. Finally, concluding remarks and a discussion of future work are presented in Section \ref{sec-conclusion}.

\section{Preliminary} \label{sec-preliminary}

\subsection{Secure Multiparty Computation Framework}  \label{sec-shamir}


MPC enables $n $ parties $P_1, \ldots, P_n$ jointly evaluate a prescribed function on private inputs in a privacy-preserving way. We assume that the $n$ parties are connected by perfectly secure channels in a synchronous network. We employ the $(t,n)$-Shamir's secret sharing scheme for representation of and secure computation on private values, by using which the computation of a function $f(\cdot)$ can be divided into three stages. Stage \uppercase\expandafter{\romannumeral1}: Each party enters his input $x_i$ to the computation using Shamir's secret sharing. Stage \uppercase\expandafter{\romannumeral2}: The parties simulate the circuit computing $f(x_1, \ldots, x_n)$, producing a new shared secret $T$ whose value is $f(x_1, \ldots, x_n)$. Stage \uppercase\expandafter{\romannumeral3}: At least $t+1$ shares of $f(x_1, \ldots, x_n)$ are sent to one party, who reconstructs it. All operations are assumed to be performed in a prime field $\mathbb F_p$. When treating fixed point and floating point number operations, we borrow the corresponding protocols in \cite{DBLP:conf/fc/CatrinaS10,DBLP:conf/ndss/AliasgariBZS13,DBLP:conf/acsac/EignerMPPK14}. By using these protocols we can treat the real number operations in a relatively precise way. Therefore, in the paper we assume there are some fundamental real number operations in MPC: addition, multiplication, division, comparison, exponentiation etc. For more formal and general presentation of this approach please see \cite{DBLP:conf/eurocrypt/CramerDM00,DBLP:conf/stoc/Ben-OrGW88}.

\subsection{Differential Privacy}  \label{sub-sec-dp}

Differential privacy of a function means that any change in a single individual input may only induce a small change in the distribution on its outcomes. A differentially private function is a kind of randomized function. The related definitions follow from the book \cite{DBLP:journals/fttcs/DworkR14}.

\begin{definition}[Randomized Function] \label{definition-randomized}
A randomized function $\mathcal M$ with domain $A$ and discrete range $B$ is associated with a mapping $M: A \rightarrow \Delta(B)$, where $\Delta(B)$ denotes the set of all the probability distribution on $B$. On input $x \in A$, the function $\mathcal M$ outputs $\mathcal M(x) = b$ with probability $ (M(x))_b$ for each $b \in B$. The probability space is over the coin flips of the function $\mathcal M$.
\end{definition}

\begin{definition}[Differential Privacy \cite{DBLP:conf/icalp/Dwork06,DBLP:journals/fttcs/DworkR14}] \label{definition-dp}
A randomized function $\mathcal{M}$ gives \emph{$(\epsilon,\delta)$-differential privacy} if for all datasets $x$ and $y$ differing on at most one element, and all $S \subset Range(\mathcal{M})$,
\[ \Pr[\mathcal{M}(x) \in S] \le \exp(\epsilon) \times \Pr[\mathcal{M}(y) \in S] + \delta, \]
where the probability space is over the coin flips of the function $\mathcal{M}$. If $\delta=0$, we say that $\mathcal{M}$ is \emph{$\epsilon$-differentially private}.
\end{definition}

There are mainly two ways to achieve differential privacy: \emph{noise mechanism} \cite{DBLP:conf/icalp/Dwork06} and \emph{exponential mechanism} \cite{DBLP:conf/focs/McSherryT07}. Noise mechanism computes the desired function on the data and then adds noise proportional to the maximum change than can be induced by changing a single element in the data set.


\begin{definition}[\cite{DBLP:journals/fttcs/DworkR14}] \label{definition-laplace-gaussian-exponential}
The exponential mechanism $\mathcal{M}(x, u, \mathcal R)$ outputs an element $r \in \mathcal R$ with probability proportional to $\exp(\frac{\epsilon u(x,r)}{2\Delta u})$. The Gaussian mechanism $\mathcal{M}(x, f)$ generates a random vector $r =(r_1, \ldots, r_n)$, where each $r_i \sim \mathcal N(f_i(x),\sigma^2)$, $\sigma >\sqrt{2 \ln 1.25/\delta}\Delta_2 f/\epsilon$. The Laplace mechanism $\mathcal{M}(x, f)$ generates a random vector $r =(r_1, \ldots, r_n)$, where each $r_i \sim \mathcal \mathrm{Lap}(f_i(x),\Delta f/\epsilon)$, $\mathrm{Lap}(f_i(x),\Delta f/\epsilon)$ denotes the Laplace distribution with variance $2(\Delta f/\epsilon)^2$ and mean $f_i(x)$.

Both the exponential mechanism and the Laplace mechanism satisfy $\epsilon$-differential privacy. The Gaussian mechanism satisfies $(\epsilon,\delta)$-differential privacy.
\end{definition}

Any sequence of computations that each provide differential privacy in isolation also provide differential privacy in sequence.

\begin{lemma}[Sequential composition of DP \cite{DBLP:conf/sigmod/McSherry09}] \label{lemma-composition-dp}
Let $\mathcal M_i$ is $(\epsilon_i, \delta_i)$-differentially private. Then their combination, defined to be $\mathcal M_{1 \cdots n}(x)= (\mathcal M_1, \ldots, \mathcal M_n)$, is $(\sum_i\epsilon_i, \sum_i\delta_i)$-differentially private.
\end{lemma}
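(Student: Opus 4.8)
The plan is to prove the statement by induction on $n$, reducing everything to the two-mechanism case, and within that case to separate the clean ``multiplicative'' part (the $\epsilon$'s add) from the more delicate ``additive'' part (the $\delta$'s add). Throughout I use that the ranges are discrete (Definition \ref{definition-randomized}), so I may argue directly with probability mass functions, and I assume, as the notation $\mathcal M_{1\cdots n}(x) = (\mathcal M_1, \ldots, \mathcal M_n)$ suggests, that the internal coin flips of the $\mathcal M_i$ are mutually independent.

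First I would dispatch the pure case $\delta_i = 0$, which is exactly McSherry's original statement. Fix neighboring datasets $x,y$ and any output tuple $o = (o_1,\ldots,o_n)$. By independence the joint mass factorizes, $\Pr[\mathcal M_{1\cdots n}(x)=o] = \prod_i \Pr[\mathcal M_i(x)=o_i]$, and applying the $\epsilon_i$-differential privacy of Definition \ref{definition-dp} to each factor gives $\Pr[\mathcal M_{1\cdots n}(x)=o] \le \prod_i e^{\epsilon_i}\Pr[\mathcal M_i(y)=o_i] = \exp(\sum_i \epsilon_i)\,\Pr[\mathcal M_{1\cdots n}(y)=o]$. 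Summing this pointwise inequality over all $o$ in an arbitrary event $S$ yields the claim with $\epsilon=\sum_i\epsilon_i$ and $\delta=0$.

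For the general case I would induct on $n$, so that it suffices to treat two independent mechanisms, $\mathcal M_1$ being $(\epsilon_1,\delta_1)$-differentially private and $\mathcal M_2$ being $(\epsilon_2,\delta_2)$-differentially private; the inductive step then applies this to the pair $(\mathcal M_{1\cdots k},\mathcal M_{k+1})$, using the induction hypothesis that $\mathcal M_{1\cdots k}$ is $(\sum_{i\le k}\epsilon_i,\sum_{i\le k}\delta_i)$-differentially private. To make the $\epsilon$'s add while carrying the $\delta$'s, I would condition on the first coordinate: writing $S_{o_1}=\{o_2:(o_1,o_2)\in S\}$ for the sections of an event $S$, independence gives $\Pr[(\mathcal M_1(x),\mathcal M_2(x))\in S] = \sum_{o_1}\Pr[\mathcal M_1(x)=o_1]\,\Pr[\mathcal M_2(x)\in S_{o_1}]$, and the elementary ``layer-cake'' fact that $(\epsilon,\delta)$-differential privacy upgrades to $\mathbb E[g(\mathcal M(x))]\le e^{\epsilon}\mathbb E[g(\mathcal M(y))]+\delta$ for every $g:B\to[0,1]$ (integrate the threshold events $\{g>\tau\}$ over $\tau\in[0,1]$) lets me apply the guarantee of $\mathcal M_2$ and then of $\mathcal M_1$ in turn to the bounded section function $g(o_1)=\Pr[\mathcal M_2(y)\in S_{o_1}]$.

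The main obstacle is the $\delta$ bookkeeping. The naive order of applying the two $(\epsilon,\delta)$ bounds introduces a spurious exponential factor, yielding $e^{\epsilon_2}\delta_1+\delta_2$ rather than the claimed $\delta_1+\delta_2$, so obtaining \emph{exactly} $\sum_i\delta_i$ is the delicate point. I would remedy this through the privacy-loss random variable $L_i(o_i)=\ln\bigl(\Pr[\mathcal M_i(x)=o_i]/\Pr[\mathcal M_i(y)=o_i]\bigr)$, which is additive over independent mechanisms, $L_{1\cdots n}=\sum_i L_i$. Splitting the range according to whether the total loss exceeds $\sum_i\epsilon_i$ and controlling this ``bad'' region by a union bound over the events $\{L_i>\epsilon_i\}$ gives loss at most $\sum_i\epsilon_i$ off a set of probability at most $\sum_i\delta_i$, which is precisely $(\sum_i\epsilon_i,\sum_i\delta_i)$-differential privacy. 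The real work lies in reconciling the $(\epsilon,\delta)$ definition with this privacy-loss tail formulation, i.e.\ justifying that each bad event carries probability at most $\delta_i$ so that the union bound is legitimate; once that is in place, everything else is the routine pure-case computation above.
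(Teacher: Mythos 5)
The paper offers no proof of this lemma at all --- it is imported verbatim from McSherry's paper as a citation --- so there is no in-paper argument to compare against; I can only assess your proposal on its own terms. Your pure case ($\delta_i=0$) and your overall architecture (induction on $n$, conditioning on the first coordinate, the layer-cake upgrade of the DP inequality to bounded test functions) are all correct, and you correctly diagnose that the naive ordering produces $e^{\epsilon_2}\delta_1+\delta_2$ rather than $\delta_1+\delta_2$.

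The gap is in the step you yourself flag as ``the real work'': the claim that $(\epsilon_i,\delta_i)$-differential privacy lets you bound the bad event $\{L_i>\epsilon_i\}$ by $\delta_i$, so that a union bound closes the argument. That implication is false. $(\epsilon,\delta)$-DP is equivalent to the \emph{excess-mass} bound $\sum_{o}\max\bigl(0,\;\Pr[\mathcal M(x)=o]-e^{\epsilon}\Pr[\mathcal M(y)=o]\bigr)\le\delta$, which controls only how far the mass on $\{L>\epsilon\}$ exceeds $e^{\epsilon}$ times the corresponding mass under $y$, not the probability of that event itself. Concretely, on $B=\{0,1\}$ with $\Pr[\mathcal M(x)=1]=0.505$ and $\Pr[\mathcal M(y)=1]=0.5$, the mechanism is $(0,0.005)$-DP, yet $\Pr[L>0]=0.505$; the tail bound you need can exceed $\delta$ by an arbitrary factor, so the union bound over $\{L_i>\epsilon_i\}$ is not legitimate. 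The repair is to replace the tail/union-bound step by the excess-mass decomposition in the peeling argument itself: write $\Pr[\mathcal M_1(x)=o_1]\le\min\bigl(\Pr[\mathcal M_1(x)=o_1],\,e^{\epsilon_1}\Pr[\mathcal M_1(y)=o_1]\bigr)+\bigl(\Pr[\mathcal M_1(x)=o_1]-e^{\epsilon_1}\Pr[\mathcal M_1(y)=o_1]\bigr)_{+}$; the second term contributes at most $\delta_1$ after multiplying by $\Pr[\mathcal M_2(x)\in S_{o_1}]\le 1$ and summing, while for the first term you bound it by $\Pr[\mathcal M_1(x)=o_1]$ when it multiplies the $\delta_2$ from $\mathcal M_2$'s guarantee (giving exactly $\delta_2$, with no $e^{\epsilon_1}$ inflation) and by $e^{\epsilon_1}\Pr[\mathcal M_1(y)=o_1]$ when it multiplies the main term. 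This yields $e^{\epsilon_1+\epsilon_2}\Pr[(\mathcal M_1(y),\mathcal M_2(y))\in S]+\delta_1+\delta_2$ and then the induction goes through as you planned.
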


Note that Lemma \ref{lemma-composition-dp} is true not only when $\mathcal M_1, \ldots, \mathcal M_n$ are run independently, but even when subsequent computations can incorporate the outcomes of the preceding computations \cite{DBLP:conf/sigmod/McSherry09}.



\subsection{Non-Uniform Random variate Generation}

Non-uniform random variate generation studies how to generate random variates drawn from a prescribed distribution function. In general, it assumes that there exists a random variate, called it a \emph{seed}, to generate \emph{randomness} for the random variates needed.



\subsubsection{The Inversion Method} \cite{Devroye86non-uniformrandom} is an important method to generate random variates, which is based upon the following property:

\begin{theorem} \label{theorem-inversion}
Let $F$ be a continuous distribution function on $\mathbb R$ with inverse $F^{-1}$ defined by
\[ F^{-1}(u) = \inf \{x:F(x)=u, 0<u <1 \}. \]
If $U$ is a uniform $[0,1]$ random variate, then $F^{-1}(U)$  has distribution function $F$. Also, if $X$ has distribution function $F $, then $F ( X )$ is uniformly distributed
on $[0,1]$.
\end{theorem}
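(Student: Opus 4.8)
The plan is to prove the two assertions separately, with both resting on a single monotonicity-continuity lemma that converts the generalized inverse $F^{-1}$ into an inequality involving $F$ itself. First I would record the elementary facts about $F^{-1}$ that follow from $F$ being a continuous distribution function. Since $F$ is nondecreasing and continuous with $\lim_{x\to-\infty}F(x)=0$ and $\lim_{x\to+\infty}F(x)=1$, the intermediate value theorem guarantees that for every $u\in(0,1)$ the level set $\{x:F(x)=u\}$ is nonempty, so the definition $F^{-1}(u)=\inf\{x:F(x)=u\}$ makes sense and in fact agrees with the more familiar quantile $\inf\{x:F(x)\ge u\}$. Moreover the level set is closed (again by continuity), so its infimum attains the value, giving $F(F^{-1}(u))=u$.

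The key step, which I expect to be the crux, is the equivalence
\[ F^{-1}(u)\le x \iff u\le F(x), \qquad u\in(0,1). \]
For the forward direction I would invoke monotonicity: $F^{-1}(u)\le x$ yields $F(x)\ge F(F^{-1}(u))=u$. For the converse I would argue by contradiction, using that $F^{-1}(u)$ is the infimum of the level set: if $x<F^{-1}(u)$, then $F(x)\ne u$ and, by monotonicity together with $F(F^{-1}(u))=u$, in fact $F(x)<u$, contradicting $u\le F(x)$. Getting this inequality to hold cleanly, including the boundary behaviour where $F$ may be flat, is the delicate part, because it is precisely where continuity and the meaning of the infimum interact; this is the main obstacle.

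Granting the equivalence, Part 1 is immediate. For every $x$,
\[ \Pr[F^{-1}(U)\le x] = \Pr[U\le F(x)] = F(x), \]
using that $U$ is uniform on $[0,1]$ and $F(x)\in[0,1]$, so $F^{-1}(U)$ has distribution function $F$. For Part 2, I would fix $u\in(0,1)$, set $b=\sup\{x:F(x)\le u\}$, and use continuity to show $F(b)=u$ and hence $\{x:F(x)\le u\}=(-\infty,b]$, so that the event $\{F(X)\le u\}$ coincides with $\{X\le b\}$. Then
\[ \Pr[F(X)\le u] = \Pr[X\le b] = F(b) = u, \]
which identifies the law of $F(X)$ as uniform on $[0,1]$. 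I would dispose of the endpoints $u=0$ and $u=1$ with a brief remark, since they form a null set and do not affect the distribution.
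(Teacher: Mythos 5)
The paper gives no proof of this statement: it is imported verbatim as Theorem 2.1 of Devroye's \emph{Non-Uniform Random Variate Generation} and used as a black box, so there is nothing internal to compare against. Your argument is correct and is the standard one --- the equivalence $F^{-1}(u)\le x \iff u\le F(x)$ for $u\in(0,1)$ (valid here because continuity makes the level set $\{x:F(x)=u\}$ nonempty, closed and bounded below, so $F(F^{-1}(u))=u$) yields the first claim immediately, and your treatment of $b=\sup\{x:F(x)\le u\}$ correctly handles flat pieces of $F$ for the second claim.
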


Theorem \ref{theorem-inversion} \cite[Theorem 2.1]{Devroye86non-uniformrandom} can be used to generate random variates with an arbitrary univariate continuous distribution function $F$ provided that $F^{-1}$ is explicitly known. Formally, we have

\begin{algorithm} \label{algorithm-inversion}
\SetKwInOut{Input}{input}\SetKwInOut{Output}{output}
\Input{None}
\Output{A random variate drawn from $F$}
\SetAlgorithmName{Algorithm} \;

Generate a uniform $[0,1]$ random variate $U $\;
\Return $X \gets F^{-1}(U)$.
\caption{The inversion method}

\end{algorithm}

\subsubsection{The Gibbs Sampling} \cite{athreya2006measure} is one Markov Chain Monte Carlo (MCMC) algorithm, a kind of algorithms widely used in statistics, scientific modeling, and machine learning to estimate properties of complex distributions. For a distribution function $F$, the Gibbs sampling generates a Markov chain $\{Y_m\}_{m \ge 0}$ with $F$ as its stationary distribution.

Let $f(r_1, \ldots, r_k)$ be the density function of $F$ and let $(R_1, \ldots, R_k)$ be a random vector with distribution $F$. For $r = (r_1, r_2, \ldots, r_{k})$, let $r_{(i)} = (r_1, r_2, \ldots, r_{i-1}, r_{i+1}, \ldots, r_{k})$ and $p_i(\cdot | r_{(i)})$ be the conditional density of $R_i$ given $R_{(i)}=r_{(i)}$. Algorithm \ref{algorithm-Gibbs-sampling} generates a Markov chain $\{Y_m\}_{m \ge 0}$.

\begin{algorithm} \label{algorithm-Gibbs-sampling}
\SetKwInOut{Input}{input}\SetKwInOut{Output}{output}
\Input{Set the initial values $[R_{0j}] \gets [r_{0j}] , j= 1,2, \ldots, k-1$}
\Output{A random vector $[Y]$ with density $f(r)$}
\SetAlgorithmName{Algorithm} \;

Generate a random variate $[R_{0k}]$ from the conditional density $p_k(\cdot | R_{\ell} = r_{0\ell}, \ell= 1,2, \ldots, k-1)$\;
\For{$i :=1$ \textbf{to} $m$} {
    \For{$j :=1$ \textbf{to} $k$}{
    Generate a random variate $[R_{ij}]$ from the conditional density $ p_j(\cdot | R_{\ell} = s_{i \ell}, \ell \in \{ 1, \ldots, k \}\setminus \{j\})$, where $s_{i \ell}= r_{i \ell}$ for $1 \le \ell <j$ and $ s_{i \ell}= r_{(i-1) \ell}$ for $j < \ell \le k$;
    }
}
The parties output the random vector $[Y_m] = ([R_{m1}], \ldots, [R_{mk}])$.
\caption{The Gibbs sampling algorithm}
\end{algorithm}

\subsection{Notations}

Throughout the paper, let $[x]$ denote that the value $x$ is secretly shared among the parties by using Shamir's secret sharing. Let $s \sim F$ denote the random variate $s$ follows the distribution function $F$.

\section{The Security of Computing Randomized Function} \label{sec-randomized-security}

In the section, we study the security of computing randomized function in the distributed setting. We focus on in what condition can the computation of a randomized function be reduced to a deterministic one. The results of the section is vital to construct differentially private protocols.

We first give the definition of (statistically) indistinguishability.

\begin{definition}[Indistinguishability \cite{DBLP:books/cu/Goldreich2004}]
Two probability ensembles $X \stackrel{\mathrm{def} }{=}\{X_w\}_{w \in S}$ and $Y \stackrel{\mathrm{def} }{=}\{Y_w\}_{w \in S}$ are called (statistically) indistinguishable, denoted $X \equiv Y$, if for every positive polynomial $p(\cdot)$, every sufficiently large $k$, and every $w \in S \cap \{0, 1\}^k$, it holds that

\[  \sum_{\alpha \in \{0,1\}^*} |\Pr[X_w = \alpha] - \Pr[Y_w = \alpha]|< \frac{1}{p(k)}.\]
\end{definition}



The security definition of protocols computing randomized functions mainly follows from \cite[Definition 7.5.1]{DBLP:books/cu/Goldreich2004}.

\begin{definition} \label{definition-security-mpc}
Let $\mathcal M(x)$ be an $n$-ary \emph{randomized} function and let $\pi(x)$ be an $n$ party protocol to compute $\mathcal M(x)$, where $\mathcal M_i(x)$ denotes the $i$-th element of $\mathcal M(x)$. The \emph{view} of the party $P_i$ during an execution of
$\pi$ on $(x,s)$, denoted $\mathrm{VIEW}^{\pi}_i(x,s)$, is $(x_i, s_i, m_1,\ldots, m_t )$, where $s_i \sim F_i$ is a random variate $P_i$ inputs, and $m_j$ represents the $j$-th message it has received. The output of $P_i$ after an execution of $\pi$ on $(x,s)$, denoted $\mathrm{OUTPUT}^{\pi}_i(x,s)$, is implicit in the party's own view of the execution, and $\mathrm{OUTPUT}^{\pi}(x,s) = (\mathrm{OUTPUT}^{\pi}_1(x,s), \ldots, \mathrm{OUTPUT}^{\pi}_n(x,s))$. For $I = \{i_1, \ldots, i_k\} \subseteq \{1, \ldots, n\}$, let $\mathcal M_I(x)$ denote the subsequence $\mathcal M_{i_1}(x), \ldots, \mathcal M_{i_k}(x)$. Let $\mathrm{VIEW}^{\pi}_I(x,s) \stackrel{\mathrm{def}}{=} \{I, \mathrm{VIEW}^{\pi}_{i_1}(x,s), \ldots, \mathrm{VIEW}^{\pi}_{i_k}(x,s)\}$. We say that $\pi$ \emph{privately computes} $\mathcal M$ if there exists an algorithm $S$, such that for every $I \subseteq \{1, \ldots, n\}$, it holds that
    \begin{equation} \label{equation-mpc}
     \{ S( I, x_I, F_I, \mathcal{M}_I(x)), \mathcal{M}(x) \}_x \equiv \{\mathrm{VIEW}^{\pi}_I(x,s), \mathrm{OUTPUT}^{\pi}(x,s)\}_x,
    \end{equation}
where $x = (x_1, \ldots, x_n)$, $x_I=(x_{i_1}, \ldots, x_{i_k})$, $(x, s) = ((x_1,s_1), \ldots, (x_n,s_n))$ and $F_I=(F_{i_1}, \ldots, F_{i_k})$.

Throughout the paper, we assume that $\mathcal M_1(x)= \cdots = \mathcal M_n(x)$ and that $\mathrm{OUTPUT}_1^{\pi}(x,s) = \cdots = \mathrm{OUTPUT}_n^{\pi}(x,s)$. That is, each party obtains the same output.
\end{definition}


We remark that the above definition is slightly different from Definition 7.5.1 in \cite{DBLP:books/cu/Goldreich2004} that a private random variate $s= (s_1, \ldots, s_n)$ is input during the execution of $\pi$. The role of $s$ is to generate \emph{randomness} in order to compute $\mathcal M(x)$ (since $\mathcal M(x)$ is a randomized function). We call $s$ a \emph{seed} to compute the randomized function $\mathcal M(x)$. By providing the seed $s$, Definition \ref{definition-security-mpc} try to capture the vital characteristic of the process of computing randomized function in the distributed setting, such as Example \ref{example1} and Example \ref{example3}.

We define the notion of private reduction and cite a corresponding composition theorem. We refer the reader to \cite{DBLP:books/cu/Goldreich2004,DBLP:series/isc/HazayL10} for further details.

\begin{definition}[Privacy Reductions]
An oracle aided protocol using an oracle functionality $f$ privately computes $\mathcal M$ if there exists a simulator $S$ for each $I$ as in Definition \ref{definition-security-mpc}. The corresponding views are defined in the natural manner to include oracle answers. An oracle-aided protocol privately reduces $\mathcal M$ to $f$ if it privately computes $\mathcal M$ when using oracle functionality $f$.
\end{definition}

\begin{theorem}[Composition Theorem for the Semi-Honest Model\cite{DBLP:books/cu/Goldreich2004}] \label{theorem-mpc-composition}
Suppose $\mathcal M$ is privately reducible to $f$ and there exists a protocol for privately computing $f$. Then, the protocol defined by replacing each oracle-call to $f$ by a
protocol that privately computes $f$ is a protocol for privately computing $\mathcal M$.
\end{theorem}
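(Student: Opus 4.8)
The plan is to prove the statement by exhibiting a simulator for the composed (real) protocol built out of the two given simulators, and then to establish Equation~\eqref{equation-mpc} by a short hybrid argument. Write $\Pi^f$ for the oracle-aided protocol that privately reduces $\mathcal M$ to $f$, with simulator $S_1$, and write $\rho$ for the protocol that privately computes $f$, with simulator $S_2$; let $\Pi$ denote the composed protocol obtained by replacing each of the $\ell$ oracle calls by an independent execution of $\rho$. Fix $I \subseteq \{1,\ldots,n\}$. The simulator $S$ for $\Pi$ operates as follows: first run $S_1(I, x_I, F_I, \mathcal M_I(x))$ to produce a simulated oracle-aided view together with a simulated output; by the convention that such views include oracle answers, this view records, for each oracle call, the inputs the parties in $I$ feed to $f$ and the answers they receive. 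Then, for each oracle call, invoke $S_2$ on the corresponding $I$-inputs and $I$-answers to manufacture a simulated $\rho$-view, and splice it in place of that query--answer pair. The output of $S$ is the resulting spliced view together with the simulated $\mathcal M(x)$.

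Next I would verify Equation~\eqref{equation-mpc} by interposing one hybrid distribution $B$ between the real execution and the output of $S$. Let $B$ be obtained by running the \emph{real} oracle-aided protocol $\Pi^f$ on $(x,s)$ to get a genuine oracle-aided view and the true output, and then replacing each genuine query--answer pair by a $\rho$-view freshly generated by $S_2$ from the true $I$-inputs and $I$-answers of that call. The two comparisons become $\{\mathrm{VIEW}^\Pi_I(x,s),\mathrm{OUTPUT}^\Pi(x,s)\} \equiv B$ and $B \equiv \{S(I,x_I,F_I,\mathcal M_I(x)),\mathcal M(x)\}$. The second equivalence follows directly from the privacy of $\Pi^f$: distribution $B$ applies one fixed randomized post-processing map (the map that expands each query--answer pair through $S_2$) to the \emph{real} oracle-aided view-and-output pair, while the output of $S$ applies the same map to the $S_1$-\emph{simulated} pair; since statistical indistinguishability is preserved under any (even computationally unbounded) post-processing, the guarantee of $S_1$ carries over.

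The first equivalence, $\{\mathrm{VIEW}^\Pi_I,\mathrm{OUTPUT}^\Pi\} \equiv B$, is the substantive step and uses the privacy of $\rho$. Here both distributions arise from the \emph{same} genuine oracle-aided execution, so all query--answer values are identically distributed in the two; the sole difference is that $\Pi$ carries real $\rho$-sub-views whereas $B$ carries $S_2$-simulated ones. I would close this gap by an inner hybrid over the $\ell$ oracle calls, replacing the real sub-view by the simulated one in one call at a time, bounding each single replacement by the statistical-distance guarantee of $S_2$ and summing by the triangle inequality. The point that makes this sound despite sequential composition is that, once we condition on the full sequence of query--answer values, the individual $\rho$-executions use independent fresh randomness and each is a stand-alone invocation on fixed inputs and outputs, so the single-execution guarantee of $S_2$ applies call-by-call even when a later query depends on an earlier answer.

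The step I expect to be the main obstacle is precisely this inner hybrid: one must argue carefully that the dependence of later oracle queries on earlier oracle answers does not break the call-by-call replacement. The clean way to handle it is to perform the hybrid \emph{after} fixing the transcript of query--answer values, so that the inputs to every $\rho$-instance are determined, and then to invoke $S_2$'s guarantee, which holds for all fixed inputs in the ensemble sense of the indistinguishability definition, while the independence of each sub-execution's randomness lets the replacements be made in isolation. A secondary subtlety is threading the global output component $\mathcal M(x)$ consistently through both hybrids so that the equivalences are joint (view together with output) rather than marginal; this is automatic here because the post-processing map leaves the output coordinate untouched.
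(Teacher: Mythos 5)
Your proof is correct, and it follows essentially the same route as the source: the paper does not prove this theorem at all but cites it directly from Goldreich's book, and your argument --- building the composed simulator by splicing $S_2$-generated sub-views into the $S_1$-simulated oracle-aided view, then closing the gap with a hybrid through the real-oracle-execution-with-simulated-sub-views distribution --- is precisely the standard proof given there. Your identified obstacle (handling the dependence of later queries on earlier answers by fixing the query--answer transcript before the inner hybrid) is handled the same way in the textbook argument, so there is nothing to add.
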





\subsection{Reducing Computation of Randomized Function to Deterministic One}

Given a randomized function $\mathcal M$, let $\mathcal M( x, s')$ denote the value of $\mathcal M(x)$ when using a random seed $s'$ drawn from a distribution function $F$. That is, $\mathcal M(x)$ is the randomized process consisting of selecting $s' \sim F$, and deterministically computing $\mathcal M(x,s')$. Let $f$ be a deterministic function such that 
    \begin{equation} \label{equation-oracle-adid-protocol}
f ((x_1, s_1),\ldots, (x_n, s_n)) \stackrel{\mathrm{def}}{=} \mathcal M(x, g( s)),
\end{equation}
where $g$ is a deterministic function such that $s' = g(s)$, $s=(s_1, \ldots, s_n)$ and the random variate $s_i \sim F_i$. That is, in the distributed setting, we reduce computing the randomized function $\mathcal M$ to computing the deterministic function $f$. In the section, we consider the security problem induced by the reduction.

We now introduce the notion of \emph{obliviousness}, which is important to privately reduce the computation of randomized function to deterministic one.


\begin{definition}[Obliviousness] \label{definition-independence-auxiliary-input}
With the notation denoted as Definition \ref{definition-security-mpc}, the seed $s$ is said to be \emph{oblivious} to $\pi$ if for every $I = \{i_1, \ldots, i_k\} \subset \{1, \ldots, n\}$ and every $s_I'$, there is
    \[ \{ \mathrm{OUTPUT}^{\pi}(x,s)|s_I =s_I' \}_x \equiv \{ \mathrm{OUTPUT}^{\pi}(x,s) \}_x, \]
where $s_I =(s_{i_1}, \ldots, s_{i_k})$ and $s_I' $ is one admissible assignment to $s_I$.
\end{definition}

\begin{lemma} \label{lemma-obliviousness-contrary}
With the notation denoted as Definition \ref{definition-security-mpc}, if $s$ is not oblivious to $\pi$, then $\pi$ is not secure to compute $\mathcal M$.
\end{lemma}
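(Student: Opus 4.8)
The plan is to prove the contrapositive: assuming that $\pi$ \emph{privately computes} $\mathcal M$ in the sense of Definition~\ref{definition-security-mpc}, I would show that the seed $s$ must be oblivious to $\pi$ in the sense of Definition~\ref{definition-independence-auxiliary-input}. So I fix the simulator $S$ whose existence is guaranteed by privacy, fix an arbitrary $I \subsetneq \{1,\ldots,n\}$ and an admissible value $s_I'$, and write $O = \mathrm{OUTPUT}^{\pi}(x,s)$ for brevity. The goal is then to establish $\{O \mid s_I = s_I'\}_x \equiv \{O\}_x$.

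First I would exploit the fact that the view $\mathrm{VIEW}^{\pi}_I(x,s)$ literally contains the coalition's seed $s_I$ (each $\mathrm{VIEW}^{\pi}_{i_j}$ records $s_{i_j}$), so that the simulated view $S(I,x_I,F_I,\mathcal M_I(x))$ must contain a corresponding simulated component $\tilde s_I$. Since indistinguishability is preserved under applying a fixed projection to both sides, projecting the joint relation~\eqref{equation-mpc} onto the pair (seed component of the view, output) gives
\[ \{(\tilde s_I, \mathcal M(x))\}_x \equiv \{(s_I, O)\}_x . \]
From this I would read off two consequences: taking the output marginal yields the correctness relation $\{\mathcal M(x)\}_x \equiv \{O\}_x$, and conditioning both sides on the seed component taking the value $s_I'$ yields $\{\mathcal M(x)\mid \tilde s_I = s_I'\}_x \equiv \{O \mid s_I = s_I'\}_x$.

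The decisive step is to argue that in the ideal experiment the simulated seed $\tilde s_I$ is independent of the output $\mathcal M(x)$, so that $\{\mathcal M(x)\mid \tilde s_I = s_I'\}_x \equiv \{\mathcal M(x)\}_x$. Intuitively this holds because the ideal output is produced by the randomized function's own coins, whereas the simulator can only manufacture $\tilde s_I$ out of $(x_I,F_I)$; it has no access to the complementary inputs $x_{\bar I}$ or distributions $F_{\bar I}$ that the protocol uses to couple $s_I$ to $O$ (exactly the coupling made visible in Example~\ref{example1} and Example~\ref{example3}, where subtracting $s_I$ off the output leaks a non-private quantity). Granting this independence and chaining the three relations gives $\{O \mid s_I = s_I'\}_x \equiv \{\mathcal M(x)\}_x \equiv \{O\}_x$, which is precisely obliviousness; its contrapositive reads ``not oblivious $\Rightarrow$ not secure,'' as claimed.

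I expect the independence step to be the main obstacle, and I would spend most of the effort there. The delicacy is twofold. First, conditioning on $\{s_I = s_I'\}$ must be shown to preserve statistical indistinguishability, which needs either that $s_I'$ carries non-negligible weight under $F_I$ or a density/limiting reformulation of Definition~\ref{definition-independence-auxiliary-input}. Second, and more seriously, one must rule out a ``cheating'' simulator that correlates $\tilde s_I$ with the very output value it is handed as $\mathcal M_I(x)$. The way I would close this gap is to vary $x_{\bar I}$ while holding $x_I$ and the observed output value fixed: the true conditional law of $s_I$ given $O$ shifts with $x_{\bar I}$ (as the explicit Gaussian computation underlying Example~\ref{example1} shows), whereas the simulator's behaviour cannot, so no single simulator can match the joint for all inputs unless the coupling between $s_I$ and $O$ is trivial, i.e. unless $s$ is oblivious.
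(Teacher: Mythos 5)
Your contrapositive route is, up to direction, the same argument the paper makes. The paper argues directly: once $\{ \mathrm{OUTPUT}^{\pi}(x,s)\mid s_I =s_I' \}_x \not\equiv \{ \mathrm{OUTPUT}^{\pi}(x,s) \}_x$, it asserts that a simulator on input $(x_I, F_I, \mathcal M_I(x))$, ``who does not know $s_I=s_I'$, is unable to get'' the conditional output distribution, and concludes that Equation (\ref{equation-mpc}) fails. That assertion is exactly your decisive step --- that in the ideal experiment the simulated seed $\tilde s_I$ is independent of $\mathcal M(x)$ --- so you have correctly located the crux. The difference is that the paper states this independence as if it were obvious, while you flag it as the main obstacle and try to repair it.

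The repair does not go through, and this is a genuine gap. Under Definition \ref{definition-security-mpc} the simulator receives $\mathcal M_I(x)$, which by the paper's convention $\mathcal M_1(x)=\cdots=\mathcal M_n(x)$ is the entire output; nothing in the definition prevents it from manufacturing $\tilde s_I$ as a function of that output, so independence is not forced. Your proposed fix --- hold $x_I$ fixed, vary $x_{\bar I}$, and argue that the true conditional law of $s_I$ given the output shifts while the simulator cannot follow --- is not a consequence of non-obliviousness: Definition \ref{definition-independence-auxiliary-input} failing only says the coupling between $s_I$ and the output is nontrivial for some $I$ and $s_I'$, not that this coupling depends on $x_{\bar I}$. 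Concretely, a protocol in which $P_1$ samples $s_1\sim\mathrm{Bern}(1/2)$ and everyone outputs $s_1$ (computing the constant-input randomized function ``uniform bit'') is maximally non-oblivious for $I=\{1\}$, yet the simulator that sets $\tilde s_1$ equal to the output it is handed reproduces the joint distribution of $(\mathrm{VIEW}^{\pi}_1,\mathrm{OUTPUT}^{\pi})$ exactly. So the independence step cannot be established in the generality in which the lemma is stated; closing it requires an extra hypothesis (e.g., restricting to protocols of the reduced form of Equation \ref{equation-oracle-adid-protocol} with a suitable condition on $g$) rather than the argument you sketch. Be aware that the paper's own proof makes the same unjustified leap, so matching it would not have saved you here.
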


\begin{proof}
Assume that $s$ is not oblivious to $\pi$. There then exist one $I$ and one $s_I'$ such that
    \[ \{ \mathrm{OUTPUT}^{\pi}(x,s)|s_I =s_I' \}_x \not\equiv \{ \mathrm{OUTPUT}^{\pi}(x,s) \}_x. \]
Now imaging the following execution of $\pi$: The parties $P_I$ input fixed value $s_I'$ for $s_I$. For any simulator with input $(x_I, M_I,F_I)$, who does not know $s_I =s_I'$, it is unable to get the distribution function $\{ \mathrm{OUTPUT}^{\pi}(x,s)|s_I =s_I' \}_x $. Therefore, there exist one $I$ and one $s_I'$ such that $\{ \mathrm{OUTPUT}^{\pi}(x,s)|s_I =s_I' \}_x $ is unable to be simulated by any simulator. However, the above distribution function is known to $P_I$ since they know the value $s_I =s_I'$, which implies that $ (\mathrm{OUTPUT}^{\pi}(x,s)|s_I =s_I')  \in \mathrm{VIEW}_I(x,s)$. Therefore, Equation (\ref{equation-mpc}) does not hold for $I$ and $s_I =s_I'$, for any simulator. Hence, $\pi$ is not secure to compute $\mathcal M$.

The claim is proved.
\end{proof}

\emph{Obliviousness}, which is a (no trivial) generalization of the notion ``obliviously'' in \cite{DBLP:conf/ccs/BunnO07}, says that the \emph{seed} (of each party or each proper subgroup of the parties) should be independent to the protocol's output. In other words, the execution of the protocol should be ``oblivious'' to the seed. One can verify that both the (not secure) two protocols in Example \ref{example1} to generate Gaussian noise and the (not secure) protocol in Example \ref{example3} to generate Laplace noise do not satisfy the property of \emph{obliviousness}.

Lemma \ref{lemma-obliviousness-contrary} gives a necessary condition to the security of protocol computing randomized function. Therefore, in order to reducing the computation of a randomized function to deterministic one, the seed should not only be \emph{secret} among the parties but also be \emph{oblivious} to the protocol's output. In the following, we give it a sufficient condition.

\begin{lemma} \label{lemma-reducing}
Let $\mathcal M$, $s$ and $f$ be defined as in Equation \ref{equation-oracle-adid-protocol}. Suppose that the following protocol, denoted $\pi$, is oblivious to $s$. Then it privately reduces $\mathcal M$ to $f $.
\begin{algorithm}
\SetKwInOut{Input}{input}\SetKwInOut{Output}{output}
\Input{$P_i$ gets $x_i$}
\Output{Each party outputs the oracle's response}
Step 1: $P_i$ selects $s_i \sim F_i$\;
Step 2: $P_i$ invokes the oracle of $f$ with query $(x_i , s_i )$, and records the oracle response.
\caption{privately reducing a randomized function to a deterministic one}
\end{algorithm}

\end{lemma}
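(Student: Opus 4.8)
The plan is to exhibit the simulator $S$ demanded by Definition \ref{definition-security-mpc} directly and then use \emph{obliviousness} to verify the indistinguishability in Equation (\ref{equation-mpc}). First I would unwind the view of $\pi$. Since $\pi$ is an oracle-aided protocol whose only message is the single oracle response, the view of party $P_i$ is $(x_i, s_i, o)$ where $o = f((x_1,s_1),\dots,(x_n,s_n)) = \mathcal M(x, g(s))$, and by the equal-output assumption at the end of Definition \ref{definition-security-mpc} every party outputs this same $o$, so $\mathrm{OUTPUT}^{\pi}(x,s) = (o,\dots,o)$. Hence for a coalition $I = \{i_1,\dots,i_k\}$ we have $\mathrm{VIEW}^{\pi}_I(x,s) = \{I, (x_{i_1}, s_{i_1}, o), \dots, (x_{i_k}, s_{i_k}, o)\}$. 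Because $g$ is chosen (as in Equation (\ref{equation-oracle-adid-protocol})) so that $g(s) = s'$ carries the seed distribution of the randomized process $\mathcal M(x)$, the oracle answer $o$ has exactly the marginal distribution of $\mathcal M(x)$.

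Next I would define the simulator. On input $(I, x_I, F_I, \mathcal M_I(x))$, set $o := \mathcal M_I(x)$ (which equals the common output by the equal-output assumption), independently draw a fresh seed $\tilde s_i \sim F_i$ for each $i \in I$ using the supplied distributions $F_I$, and return $\{I, (x_{i_1}, \tilde s_{i_1}, o), \dots, (x_{i_k}, \tilde s_{i_k}, o)\}$. Paired with the global output $\mathcal M(x)$ in the left-hand ensemble of Equation (\ref{equation-mpc}), this reproduces the correct structure: identical input components $x_I$, one seed per coalition member, and an output copy tied to the global output.

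The core of the argument is the comparison of the real and simulated ensembles. The $x_I$-components are deterministic in $x$ and agree, and in both ensembles the output copy inside each view is forced to equal the global output term $o$; so the only possible discrepancy lies in the joint law of the coalition seed and the output, namely $(s_I, o)$ in the real world versus $(\tilde s_I, o)$ in the simulation, where $\tilde s_I \sim F_I$ is drawn independently of $o$. This is exactly where obliviousness enters: Definition \ref{definition-independence-auxiliary-input} asserts that conditioning on $s_I = s_I'$ leaves the distribution of $\mathrm{OUTPUT}^{\pi}(x,s)$ unchanged for every admissible $s_I'$, which says precisely that $o$ is statistically independent of $s_I$. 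Therefore $(s_I, o)$ and $(\tilde s_I, o)$ have the same joint distribution, and the two ensembles in Equation (\ref{equation-mpc}) are identically distributed, hence indistinguishable. I expect the main obstacle to be the bookkeeping that guarantees the single output copy is shared consistently between each simulated view and the global output term, so that obliviousness may be applied to the seed--output \emph{pair} rather than merely to the marginals; once that consistency is secured, the independence furnished by obliviousness closes the proof.
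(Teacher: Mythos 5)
Your proposal is correct and follows essentially the same route as the paper's proof: the same simulator (resample a fresh seed from $F_I$ and attach it to the given output), with obliviousness invoked exactly where the paper invokes it, namely to factor the joint law of the coalition's seed and the output into a product so that the real and simulated ensembles coincide. The only cosmetic difference is that you treat a general coalition $I$ directly, whereas the paper writes out the single-party case and notes the general case is analogous.
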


\begin{proof}
Clearly, this protocol computes $\mathcal M$. To show that $\pi$ privately computes $\mathcal M$, we need to present a simulator $S_I$ for each group of parties $P_{i_1}, \ldots, P_{i_k}$'s view. For notational simplicity, we only prove that there exists a simulator $S_i$ for each party $P_i$. On input $(x_i , v_i )$, where $x_i$ is the local input to $P_i$ and $v_i$ is its local output, the simulator selects $s_i \sim F_i$, and outputs $(x_i , s_i , v_i )$. The main observation underlying the analysis of this simulator is that for every fixed $x = (x_1, \ldots, x_n)$ and $s'$, we have $v = \mathcal M(x,s')$ if and only if $v = f (x,s)$, for every $s$ satisfying $s' = g(s)$. Now, let $\xi_i$ be a random variable representing the random choice of $P_i$ in Step 1, and $\xi_i'$ denote the corresponding choice made by the simulator $S_i$. Then, referring to Equation \ref{equation-mpc}, we show that for every fixed $x$, $s_i$ and $v = (v_1, \ldots, v_n)$, it holds that
\begin{align*}
 & \Pr \left[ \mathrm{VIEW}^{\pi}_i(x,s)=(x_i,s_i, v_i) \wedge \mathrm{OUTPUT}^{\pi}(x,s) =v  \right]  \\
   &= \Pr \left[ (\xi_i = s_i) \wedge \mathrm{OUTPUT}^{\pi}(x,\xi)=f(x,\xi) =v  \right] \\
    &= \Pr \left[ (\xi_i = s_i)\right] \Pr \left[ \mathcal M(x) =v  \right]        \\
    &= \Pr \left[ (\xi_i' = s_i)\right] \Pr \left[ \mathcal M(x) =v  \right] \\
    &= \Pr \left[ (\xi_i' = s_i) \wedge \mathcal M(x) =v  \right]          \\
    &= \Pr \left[ S_i(x_i, F_i, \mathcal M_i(x)) =(x_i,s_i, v_i) \wedge \mathcal M(x) =v  \right]
\end{align*}
where the equalities are justified as follows: the 1st by the definition of $\pi$, the 2nd by the obliviousness of $\pi$ to $\xi$ and the definition of $f$, the 3rd by definition of $\xi_i$ and $\xi_i'$, the 4th by the independence of $\xi'$ and $\mathcal M$, and the 5th by definition of $S_i $. Thus, the simulated view (and output) is distributed identically to the view (and output) in a real execution.

Similarly, for each group of parties $P_{i_1}, \ldots, P_{i_k}$'s view, there exists a simulator $S$ such that Equation \ref{equation-mpc} holds.

The proof is complete.
\end{proof}

We remark that the proof technique in Lemma \ref{lemma-reducing} is borrowed from \cite[Proposition 7.3.4]{DBLP:books/cu/Goldreich2004}.

By combining Theorem \ref{theorem-mpc-composition}, Lemma \ref{lemma-obliviousness-contrary} and Lemma \ref{lemma-reducing}, we have the following theorem.

\begin{theorem}  \label{theorem-obliviousness-mpc-security}
Let Equation \ref{equation-oracle-adid-protocol} hold and let $\pi(x,s)$ be a secure protocol to compute $f(x,s)$. Then $\pi(x, \cdot)$ privately compute $\mathcal M(x)$ if and only if $\pi(x,s)$ is oblivious to $s$.
\end{theorem}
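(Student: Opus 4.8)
The plan is to establish the two directions of the equivalence separately, assembling the three results already in hand: Lemma \ref{lemma-obliviousness-contrary}, Lemma \ref{lemma-reducing}, and the composition theorem (Theorem \ref{theorem-mpc-composition}). The single observation that ties everything together is that obliviousness (Definition \ref{definition-independence-auxiliary-input}) is a statement purely about the output distribution $\{\mathrm{OUTPUT}^{\pi}(x,s)\}_x$, and that both the oracle-aided reduction protocol of Lemma \ref{lemma-reducing} and the concrete protocol $\pi$ computing $f$ yield the same output, namely $f(x,s) = \mathcal M(x, g(s))$ by Equation \ref{equation-oracle-adid-protocol}. Hence obliviousness of one is equivalent to obliviousness of the other, and I intend to use this identification to pass freely between the two protocols.

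\textbf{Necessity (only if).} Suppose $\pi(x,\cdot)$ privately computes $\mathcal M(x)$. Here I would argue by contraposition directly from Lemma \ref{lemma-obliviousness-contrary}, which asserts that whenever $s$ fails to be oblivious to a protocol computing $\mathcal M$, that protocol cannot be secure. Since $\pi(x,\cdot)$ is by assumption secure, the contrapositive forces $s$ to be oblivious to $\pi(x,\cdot)$, and therefore to $\pi(x,s)$ via the output-distribution identification above. This direction is essentially immediate once Lemma \ref{lemma-obliviousness-contrary} is available.

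\textbf{Sufficiency (if).} Suppose $\pi(x,s)$ is oblivious to $s$. The plan is a two-step reduce-then-compose argument. First, I consider the oracle-aided protocol of Lemma \ref{lemma-reducing}, in which each $P_i$ samples $s_i \sim F_i$ and queries an oracle for $f$ on $(x_i,s_i)$; since this protocol has the same output distribution as $\pi$, the obliviousness hypothesis transfers to it, so Lemma \ref{lemma-reducing} applies and the oracle-aided protocol privately reduces $\mathcal M$ to $f$. Second, because $\pi(x,s)$ is by hypothesis a secure protocol for $f$, I invoke Theorem \ref{theorem-mpc-composition}: replacing the oracle call to $f$ by the concrete subprotocol $\pi$ gives a protocol that privately computes $\mathcal M$, and this composed object is exactly the $\pi(x,\cdot)$ named in the statement.

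The step I expect to require the most care is the bridging observation itself, namely making precise that the obliviousness condition stated for $\pi$ (which computes $f$) is interchangeable with obliviousness of the oracle-aided reduction protocol, and that the protocol produced by the composition is indeed the $\pi(x,\cdot)$ of the statement. This rests on the identity $f(x,s) = \mathcal M(x,g(s))$ from Equation \ref{equation-oracle-adid-protocol} together with the fact that a secure protocol for $f$ reproduces the value of $f$ up to statistical indistinguishability, so that no discrepancy in outputs can separate the two obliviousness conditions. Once this identification is granted, the remaining pieces are straightforward applications of the cited results and present no genuine difficulty.
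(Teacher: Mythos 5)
Your proposal is correct and follows exactly the route the paper takes: the paper proves this theorem only by the one-line remark that it follows ``by combining Theorem \ref{theorem-mpc-composition}, Lemma \ref{lemma-obliviousness-contrary} and Lemma \ref{lemma-reducing},'' and your two directions (contrapositive of Lemma \ref{lemma-obliviousness-contrary} for necessity; Lemma \ref{lemma-reducing} plus the composition theorem for sufficiency) are precisely that combination, spelled out. Your explicit attention to the bridging step --- that obliviousness of the oracle-aided reduction and of the concrete protocol $\pi$ coincide because both have output $f(x,s)=\mathcal M(x,g(s))$ --- is a point the paper leaves implicit, so your write-up is if anything more careful than the original.
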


Theorem \ref{theorem-obliviousness-mpc-security} holds for differentially private functions since the later is a kind of randomized functions. Therefore, Theorem \ref{theorem-obliviousness-mpc-security} gives a necessary and sufficient condition about how to privately compute differentially private functions. Theorem \ref{theorem-obliviousness-mpc-security} can not only be used to determine whether a protocol computing differentially private function is secure, such as the protocols in Example \ref{example1} and Example \ref{example3}, but also be used to construct secure one, such as those protocols in Section \ref{sec-random-generation-protocols}.

\section{Multiparty Differential Privacy}  \label{sec-mdp}

For an $(\epsilon,\delta)$-differentially private function and a protocol computing it in the distributed setting, we are willing to see that the protocol has inherited the $(\epsilon,\delta)$-differential privacy property of the function it computes. It is intuitive that if the protocol \emph{privately} compute the function it will inherit the property naturally. In the section, we will prove that this is the fact.


We first introduce the notion of differential privacy in the distributed setting, which says that the view of each party (or each subgroup of the parties) is differentially private in respect to other parties' inputs.

\begin{definition}[Multiparty differential privacy \cite{DBLP:conf/crypto/GoyalMPS13}] \label{definition-mdp}
Let the notations be denoted as Definition \ref{definition-security-mpc}. We say that $x= (x_1, \ldots, x_n)$ and $y= (y_1, \ldots, y_n)$ differ on at most one element if there exists $i_0$ such that $x_i = y_i$ for all $i \in \{1,\ldots, n\}\setminus \{i_0\}$ and that $x_{i_0}, y_{i_0}$ differ on at most one element.
The protocol $\pi$ is said to be \emph{$(\epsilon, \delta)$-differentially private} if for all datasets $x,y$ differing on at most one element, for all $S$, and for all $I \subseteq \{1,2,\ldots, n\}\setminus \{i_0\}$,
    \begin{equation}  \label{equation-mdp}
       \Pr \left[ (\mathrm{VIEW}^{\pi}_I(x,s), \mathrm{OUTPUT}_I^{\pi}(x,s))  \in S \right]\le \exp(\epsilon) \times \Pr \left[ (\mathrm{VIEW}^{\pi}_I(y,s), \mathrm{OUTPUT}_I^{\pi}(y,s)) \in S \right] + \delta.
    \end{equation}
\end{definition}

\begin{theorem} \label{theorem-mdp}
Assume that $\mathcal M$ is an $(\epsilon, \delta)$-differentially private algorithm and that $\pi$ is a protocol to privately compute $\mathcal M$ in the distributed setting. Then $\pi$ is $(\epsilon, \delta)$-differentially private.
\end{theorem}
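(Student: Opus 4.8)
The plan is to leverage the simulation guarantee of \emph{private computation} (Definition \ref{definition-security-mpc}) to express the joint view-and-output of any coalition $I$ as the output of a simulator $S$, and then to recognize that, whenever $i_0 \notin I$, this simulator is nothing but a randomized post-processing of the mechanism output $\mathcal M(x)$. Differential privacy of $\mathcal M$ together with closure of DP under post-processing then yields inequality (\ref{equation-mdp}) directly.

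First I would fix neighboring datasets $x,y$ differing only in coordinate $i_0$, a coalition $I \subseteq \{1,\ldots,n\}\setminus\{i_0\}$, and a measurable event $T$. Since $\pi$ privately computes $\mathcal M$, there is a simulator $S$ with $\{S(I, x_I, F_I, \mathcal M_I(x)), \mathcal M(x)\}_x \equiv \{\mathrm{VIEW}^{\pi}_I(x,s), \mathrm{OUTPUT}^{\pi}(x,s)\}_x$. Because the output is implicit in each party's view and all parties share the same output, projecting this ensemble onto the $I$-coordinates gives $\{S(I, x_I, F_I, \mathcal M_I(x))\}_x \equiv \{(\mathrm{VIEW}^{\pi}_I(x,s), \mathrm{OUTPUT}^{\pi}_I(x,s))\}_x$, and statistical distance does not grow under such a projection.

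Second, the decisive observation: for $I$ avoiding $i_0$ we have $x_I = y_I$, while $F_I$ and $I$ are fixed, so the only argument of $S$ that can differ between the $x$-instance and the $y$-instance is $\mathcal M_I(x)$ versus $\mathcal M_I(y)$. By the standing assumption $\mathcal M_1 = \cdots = \mathcal M_n$, this argument equals the common mechanism output $\mathcal M(x)$. Hence the map $x \mapsto S(I, x_I, F_I, \mathcal M(x))$ is a randomized algorithm applied to $\mathcal M(x)$ with all other inputs held fixed, i.e.\ a post-processing of $\mathcal M$. Invoking $(\epsilon,\delta)$-differential privacy of $\mathcal M$ for the neighbors $x,y$ and the standard fact that post-processing preserves differential privacy, I get $\Pr[S(I, x_I, F_I, \mathcal M(x)) \in T] \le \exp(\epsilon)\Pr[S(I, y_I, F_I, \mathcal M(y)) \in T] + \delta$. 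Transferring both sides back through the indistinguishability of the previous step then reproduces (\ref{equation-mdp}).

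The step I expect to be the main obstacle is reconciling the \emph{statistical} indistinguishability provided by private computation, which is only negligibly close in the security parameter $k$, with the \emph{exact} constants $(\epsilon,\delta)$ demanded by Definition \ref{definition-mdp}. Strictly, transferring the inequality across $\equiv$ costs an additive $\mathrm{negl}(k)$ on each side, so the honest conclusion is $(\epsilon, \delta + \mathrm{negl}(k))$-differential privacy; I would either argue that the information-theoretic Shamir-based instantiation admits a \emph{perfect} simulator, making the extra term vanish, or fold the negligible term into $\delta$ and state the result asymptotically. A secondary point to verify carefully is the post-processing reduction itself: I must confirm that $S$ depends on the sensitive coordinate $x_{i_0}$ only through $\mathcal M(x)$ and never directly, which is exactly what $i_0 \notin I$ guarantees since the simulator is never handed $x_{i_0}$, and that the auxiliary inputs $x_I = y_I$ and $F_I$ act as fixed side information rather than as a channel leaking $x_{i_0}$.
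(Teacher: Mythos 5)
Your proposal follows essentially the same route as the paper's proof: both invoke the simulator guaranteed by private computation, observe that for a coalition $I$ avoiding $i_0$ the simulator's inputs $x_I = y_I$ and $F_I$ coincide on the two neighboring datasets so that the coalition's view (with the output folded in as something implicit in the view) is a post-processing of $\mathcal M(x)$, and then transfer the post-processing-preserved DP inequality across the indistinguishability. Your closing caveat about the negligible slack incurred when moving an inequality across statistical rather than perfect indistinguishability is a point the paper silently elides, and your proposed remedies (perfect simulation in the information-theoretic Shamir setting, or absorbing $\mathrm{negl}(k)$ into $\delta$) are the appropriate ones.
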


\begin{proof} 
For notational simplicity, we only prove the case of $I \in \{1,2,\ldots, n\}$. The general case can be treated similarly. We inherit the notations from Definition \ref{definition-mdp}.

Since $\mathcal M(x)$ is $(\epsilon, \delta)$-differentially private, we have
    \begin{equation} \label{equation-2}
     \Pr[\mathcal{M}(x) \in \bar S] \le \exp(\epsilon) \times \Pr[\mathcal{M}(y) \in \bar S] + \delta.
    \end{equation}
Then for all $S'$ and for all $i \in \{1,\ldots, n\}\setminus \{i_0\}$,
    \[ \Pr[( x_i, \mathcal{M}(x)) \in S'] \le \exp(\epsilon) \times \Pr[( y_i, \mathcal{M}(y)) \in S'] + \delta, \]
since $x_i = y_i$.

Therefore, for \emph{all} (post-processing \cite[page 229]{DBLP:journals/fttcs/DworkR14}) algorithm $S_i$ and all domain $S''$,
     \[ \Pr[S_i( x_i, \mathcal{M}(x)) \in S''] \le \exp(\epsilon) \times \Pr[S_i( y_i, \mathcal{M}(y)) \in S''] + \delta. \]
On the other hand, since $\pi(x)$ is a protocol to privately compute $\mathcal M(x)$, there \emph{exists} an algorithm $\bar S_i$ such that
    \[ \{ \bar S_i( x_i, \mathcal{M}(x)) \}_x \equiv \{\mathrm{VIEW}^{\pi}_i(x,s)\}_x. \]
Combining the last two formulas, we have
    \begin{equation} \label{equation-3}
      \Pr[\mathrm{VIEW}^{\pi}_i(x,s) \in S''] \le \exp(\epsilon) \times \Pr[\mathrm{VIEW}^{\pi}_i(y,s) \in S''] + \delta.
    \end{equation}
Moreover, since $\mathrm{OUTPUT}_i^{\pi}(x,s)$ is implicit in $\mathrm{VIEW}^{\pi}_i(x,s)$ (see Definition \ref{definition-security-mpc}), the later can be seen as a post-processing of the former. Therefore, \emph{for all} $x$,
    \begin{equation} \label{equation-5}
     \Pr \left[ (\mathrm{VIEW}^{\pi}_i(x,s), \mathrm{OUTPUT}_i^{\pi}(x,s))  \in S \right]= \Pr[\mathrm{VIEW}^{\pi}_i(x,s) \in S'']
    \end{equation}
Inputting Equation (\ref{equation-5}) into Equation (\ref{equation-3}), we have Equation (\ref{equation-mdp}).

The proof is complete.
\end{proof}

The following theorem provides the sequential composition property to differentially private protocols.

\begin{theorem}[Composition theorem] \label{theorem-composition-DP} Assume that the protocol $\pi_i$ privately computes  $(\epsilon_i, \delta_i)$-differentially private algorithm $\mathcal M_i$ for $1 \le i \le n$. Then their composition, defined to be $\mathcal \pi_{1 \cdots n}= (\pi_1, \ldots, \pi_n)$, is $(\sum_i\epsilon_i, \sum_i\delta_i)$-differentially private.





\end{theorem}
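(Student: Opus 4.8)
The plan is to factor the claim through the single-mechanism result already established in Theorem \ref{theorem-mdp}, thereby reducing the composition of \emph{protocols} to the composition of the underlying \emph{centralized mechanisms}. Concretely, I would introduce the combined mechanism $\mathcal M_{1\cdots n}(x) = (\mathcal M_1(x), \ldots, \mathcal M_n(x))$ and the combined protocol $\pi_{1\cdots n} = (\pi_1, \ldots, \pi_n)$, and then establish two facts: first, that $\mathcal M_{1\cdots n}$ is $(\sum_i \epsilon_i, \sum_i \delta_i)$-differentially private as a centralized mechanism; second, that $\pi_{1\cdots n}$ privately computes $\mathcal M_{1\cdots n}$ in the sense of Definition \ref{definition-security-mpc}. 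Once both hold, Theorem \ref{theorem-mdp} applied to the pair $(\mathcal M_{1\cdots n}, \pi_{1\cdots n})$ immediately yields that $\pi_{1\cdots n}$ is $(\sum_i\epsilon_i, \sum_i\delta_i)$-differentially private in the sense of Definition \ref{definition-mdp}, which is exactly the desired conclusion.

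The first fact is a direct invocation of the centralized sequential composition result, Lemma \ref{lemma-composition-dp}: since each $\mathcal M_i$ is $(\epsilon_i, \delta_i)$-differentially private, the tuple $\mathcal M_{1\cdots n}$ is $(\sum_i\epsilon_i, \sum_i\delta_i)$-differentially private, and crucially this holds even in the adaptive regime where $\mathcal M_{i+1}$ may incorporate the outcomes of $\mathcal M_1, \ldots, \mathcal M_i$, as remarked immediately after that lemma. For the second fact I would build a simulator $S$ for a coalition $I$ of the combined protocol by running, in sequence, the simulators $S^{(1)}, \ldots, S^{(n)}$ guaranteed for $\pi_1, \ldots, \pi_n$, feeding $S^{(j)}$ the component $\mathcal M_{j,I}(x)$ of the supplied output together with $(I, x_I, F_I^{(j)})$, and concatenating the resulting simulated views together with the intermediate outputs that each subsequent protocol consumes as input. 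Indistinguishability of this concatenation from the real combined view $\mathrm{VIEW}^{\pi_{1\cdots n}}_I(x,s)$ then follows by a hybrid argument over the $n$ sub-protocols, replacing one real sub-view at a time by its simulated counterpart and invoking the indistinguishability guaranteed for the corresponding $\pi_j$; this is precisely the semi-honest composition theorem (Theorem \ref{theorem-mpc-composition}), instantiated by viewing each $\pi_j$ as an oracle-aided protocol that calls the ideal functionality $\mathcal M_j$.

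I expect the main obstacle to be this second fact, specifically the threading of the independently drawn seeds $s^{(1)}, \ldots, s^{(n)}$ and the intermediate outputs through the hybrid argument. The delicate point is to verify that replacing one real sub-view by its simulated version does not perturb the distribution of the remaining sub-views; this requires that the seed of each $\pi_j$ be drawn afresh from $F^{(j)}$ independently of the other executions, and that each $\pi_j$ already be oblivious to its own seed, which is exactly what private computation of $\mathcal M_j$ buys us through Theorem \ref{theorem-obliviousness-mpc-security}. With the hybrids thus well defined, the argument closes. Finally, because the differing index $i_0$ satisfies $x_i = y_i$ for every $i \neq i_0$ and the coalition is restricted to $I \subseteq \{1,\ldots,n\}\setminus\{i_0\}$, the simulation step transfers the $(\sum_i\epsilon_i, \sum_i\delta_i)$-differential privacy of $\mathcal M_{1\cdots n}$ to the combined view and output by post-processing, exactly as in the proof of Theorem \ref{theorem-mdp}, which completes the argument.
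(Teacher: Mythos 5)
Your proposal is correct and follows essentially the same route as the paper's own proof: the paper likewise invokes Theorem \ref{theorem-mpc-composition} to conclude that $\pi_{1\cdots n}$ privately computes $\mathcal M_{1\cdots n}$, then combines Lemma \ref{lemma-composition-dp} with Theorem \ref{theorem-mdp} to transfer $(\sum_i\epsilon_i,\sum_i\delta_i)$-differential privacy to the composed protocol. Your elaboration of the simulator and hybrid argument simply fills in detail the paper leaves to the cited composition theorem.
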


\begin{proof}
Since each $\pi_i$ is secure to compute $\mathcal M_i$, then their combination $\pi_{1 \cdots n}$ is secure to compute $\mathcal M_{1 \cdots n}$ by Theorem \ref{theorem-mpc-composition}. By Theorem \ref{theorem-mdp} and Lemma \ref{lemma-composition-dp} we have $\pi_{1 \cdots n}$ is $(\sum_i\epsilon_i, \sum_i\delta_i)$-differentially private.
\end{proof}

Note that, by Lemma \ref{lemma-composition-dp}, Theorem \ref{theorem-composition-DP} is true not only when $\pi_1, \ldots, \pi_n$ are run independently, but even when subsequent computations can incorporate the outcomes of the preceding computations.


\section{Protocol Construction} \label{sec-random-generation-protocols}

In this section, we use the results in Section \ref{sec-randomized-security} to construct secure protocols to compute randomized functions. We first design a protocol to generate the uniform random variate and a protocol to implement the inversion method in the distributed setting. Then we construct secure protocols to implement the Laplace mechanism and the exponential mechanism. Importantly, all of these protocols satisfy the property of \emph{obliviousness}.

Recall that, we let $[x]$ denote that the value $x$ is secretly shared among the parties by using Shamir's secret sharing.

\subsection{Multiparty Inversion Method}

We first provide Protocol \ref{protocol-Bernoulli-distributed} to generate random variate $X$ drawn from the Bernoulli $\mathrm{Bern}(1/2)$ distribution in the distributed setting, where $X$ takes on only two values: 0 and 1, both with probability $1/2$. Protocol \ref{protocol-Bernoulli-distributed} uses the fact that the $\mathrm{XOR}$ of two Bernoulli $\mathrm{Bern}(1/2)$ random variates is also a Bernoulli $\mathrm{Bern}(1/2)$ random variate.

\begin{algorithm}  \label{protocol-Bernoulli-distributed}
\SetKwInOut{Input}{input}\SetKwInOut{Output}{output}
\Input{None}
\Output{The parties obtain a random variate $[X]$ drawn from $\mathrm{Bern}(1/2)$}
The party $P_i$ generates a random bit $s_i$ drawn from the Bernoulli $\mathrm{Bern}(1/2)$ distribution by flipping an unbiased coin and shares it among the parties, for $ 1 \le i \le n$\;
The parties compute $[X] \gets \oplus_{i=1}^n [s_i]$ and output it, where $\oplus$ denote $\textrm{XOR}$ operation.
\caption{Multiparty generation of Bernoulli $\mathrm{Bern}(1/2)$ random variate}
\end{algorithm}

We give Protocol \ref{protocol-Gaussian-distributed} to generate random variate drawn from the standard Gaussian distribution $\mathcal{N}(0,1)$ in the distributed setting. The protocol approximates the Gaussian distribution $\mathcal{N}(0,1)$ by using the central limit theorem.

\begin{algorithm}  \label{protocol-Gaussian-distributed}
\SetKwInOut{Input}{input}\SetKwInOut{Output}{output}
\Input{None}
\Output{The parties obtain a random variate $[X]$ drawn from $\mathcal{N}(0,1)$}
The parties generate $k$ independent random variates $[s_1], \ldots, [s_k]$ drawn from the Bernoulli $\mathrm{Bern}(1/2)$ distribution by invoking Protocol \ref{protocol-Bernoulli-distributed}\;
The parties compute $[Y] \gets \sum_{i=1}^k [s_i]$\;
The parties compute $[X] \gets ([Y]-k/2)/(\sqrt k/2)$.
\caption{Multiparty generation of Gaussian $\mathcal{N}(0,1)$ random variate}
\end{algorithm}

We now use Protocol \ref{protocol-Gaussian-distributed} to design Protocol \ref{protocol-uniform-distributed} to generate random variate drawn from the uniform distribution $U(0,1)$ in the distributed setting. Protocol \ref{protocol-uniform-distributed} uses the result in Theorem \ref{theorem-inversion}.


\begin{algorithm}   \label{protocol-uniform-distributed}
\SetKwInOut{Input}{input}\SetKwInOut{Output}{output}
\Input{None}
\Output{The parties obtain a random variate $[X]$ drawn from $U(0,1)$}
The parties generate a random variate $[\xi]$ drawn from $\mathcal{N}(0,1)$ by using Protocol \ref{protocol-Gaussian-distributed} \;
The parties compute $[X] \leftarrow [G(\xi)]$, where $G(x)$ is the distribution function of $\mathcal{N}(0,1)$. \emph{Note that $[G(\xi)] = \frac{1}{2} + [\frac{1}{\sqrt{2\pi}} \int_{0}^{\xi} \exp \left( -\frac{t^2}{2} \right) dt]$ where the second summand can be evaluated as follows by using the composite trapezoidal method \cite{ascher2011first}. Set $f(t)=\frac{1}{\sqrt{2\pi}} \exp \left( -\frac{t^2}{2} \right) $ and $Y = \int_{0}^{\xi} f(t) dt$.}
    \begin{enumerate}
    \item The parties negotiate a step length $h$ and a positive integer $k$ such that $kh=1$\;
    \item Each party computes $t_i = hi$ for $i \in \{0,\ldots,k\}$ separately\;
    \item The parties compute $[t_i'] = t_i [\xi]$ for $i \in \{0,\ldots,k\}$\;
    \item The parties compute $f([t_i'])$ for $i \in \{0,\ldots,k\}$\;
    \item The parties compute $[Y] \gets h[\xi](f(0) +f([\xi])+2\sum_{i=1}^{k-1} f([t_i']) )/2.$
    \end{enumerate}
\caption{Multiparty generation of Uniform $U(0,1)$ random variate}
\end{algorithm}

The inversion method presented in Algorithm \ref{algorithm-inversion} is an important method to generate univariate random variable. We now give its new edition in the distributed setting as shown in Protocol  \ref{protocol-inversion-distributed}. Protocol \ref{protocol-inversion-distributed} is a powerful and fundamental protocol to construct other complex protocols, such as protocols of the Gaussian mechanism, the Laplace mechanism and the exponential mechanism as shown in the followings.

\begin{algorithm}  \label{protocol-inversion-distributed}
\SetKwInOut{Input}{input}\SetKwInOut{Output}{output}
\Input{The univariate continuous distribution function $F(t)$} 
\Output{The parties obtain a random variate $[X]$ drawn from $F(t)$}
The parties generate a random number $[\xi]$ drawn from $U(0,1)$ by using Protocol \ref{protocol-uniform-distributed} \;
The parties compute $[X] \gets F^{-1}([\xi])$. \emph{Note that $F^{-1}([\xi])$, if it has an explicit expression, can be computed by using the non-decreasing property of $F(t)$ and the comparison operator. When $F^{-1}([\xi])=t$ only has implicit integral expression, i.e., $[\xi] = \int_{-\infty}^t f(s)ds$, it can be computed as follows. }
    \begin{enumerate}
    \item The parties compute $[\xi'] \gets [\xi] -\int_{-\infty}^0 f(s)ds = \int_{0}^t f(s)ds$;
    \item The parties choose two values $[a],[b]$ such that $\int_{0}^a f(s)ds \le \xi' \le \int_{0}^b f(s)ds$ by using the non-decreasing property of $\int_{0}^t f(s)ds$ and the comparison operator;
    \item The parties evaluate $[t]$ in the equation $[\xi']= \int_{0}^t f(s)ds$ by using the bisection method \cite{ascher2011first} over the initial interval $[a,b]$;
    \item The parties set $[X] \gets [t]$;
    \end{enumerate}
\caption{Multiparty inversion method}
\end{algorithm}

\subsection{Multiparty Differentially Private Protocols}

We now use the protocols of generating random variates to construct protocols of the Laplace mechanism and the exponential mechanism.

\subsubsection{Multiparty Gaussian Mechanism}

We give Protocol \ref{protocol-Gaussian-distributed} to generate random variate drawn from the Gaussian distribution $\mathcal{N}(f(x),\sigma^2)$ in the distributed setting, which achieves Gaussian mechanism. The protocol approximates the Gaussian distribution $\mathcal{N}(0,\sigma^2)$ by using the central limit theorem \cite[Corollary 11.1.3]{athreya2006measure}.

\begin{algorithm} \label{protocol-Gaussian-mechanism-distributed}
\SetKwInOut{Input}{input}\SetKwInOut{Output}{output}
\Input{Each party $P_i$ shares his input $x_i$ among the parites}
\Output{The parties obtain a random variate random vector $[X] =([X_1], \ldots, [X_n])$ drawn from $\prod_{j=1}^n \mathcal N(f_j(x),\sigma^2)$}
\For{$j :=1$ \textbf{to} $n$} {
The parties generate $k$ independent random variates $[s_1], \ldots, [s_k]$ drawn from the Bernoulli $\mathrm{Bern}(1/2)$ distribution by invoking Protocol \ref{protocol-Bernoulli-distributed}\;
The parties compute $[Y_i] \gets \sigma [s_i]$ for $1 \le i \le k$\;
The parties compute $[\bar Y] \gets \sum_{i=1}^k [Y_i]/k$\;
The parties compute $[X'_j] \gets \sqrt k([\bar Y]-\sigma/2)$\;
The parties set $[X_j] \gets [f_j(x)] + [X'_j]$.
}
\caption{Multiparty Gaussian Mechanism}
\end{algorithm}

\subsubsection{Multiparty Laplace Mechanism}

The Laplace mechanism in the distributed setting is shown in Protocol \ref{protocol-laplace-mechanism}.

\begin{algorithm}  \label{protocol-laplace-mechanism}
\SetKwInOut{Input}{input}\SetKwInOut{Output}{output}
\Input{Each party $P_i$ secretly shares his input $x_i$ among the parties}
\Output{The parties obtain a random vector $[X] =([X_1], \ldots, [X_n])$ drawn from $\prod_{j=1}^n \mathrm{Lap}(f_j(x),\Delta f/\epsilon)$}
\For{$j :=1$ \textbf{to} $n$} {
The parties generate a random variate $[\xi_j]$ drawn from $\mathrm{Lap}(\Delta f/\epsilon)$ by using Protocol \ref{protocol-inversion-distributed}($F(t)$), where $F(t) = \frac{\epsilon}{2\Delta f} \int_{-\infty}^t \exp(-\frac{\epsilon|s|}{\Delta f} )ds$\;
The parties set $[X_j] \gets [f_j(x)] + [\xi_j]$.
}
\caption{Multiparty Laplace Mechanism}
\end{algorithm}

\subsubsection{Multiparty Exponential mechanism}   \label{sub-sec-exponential-mechanism}

When the range $\mathcal R$ is a finite set, we set $\mathcal R = \{ r_1, \ldots, r_{|\mathcal R|} \}$. The aim of the exponential mechanism is to draw a random element $r \in \mathcal R$ with probability $\exp(\frac{\epsilon u(x,r)}{2\Delta u})$. Protocol \ref{protocol-discrete-exponential-distributed} achieves the aim whose main idea is the sequential search algorithm in \cite[page 85]{Devroye86non-uniformrandom}. In Protocol \ref{protocol-discrete-exponential-distributed}, the comparison function $\mathrm{LT}([S], [\xi])=1$ if $S < \xi$ and $\mathrm{LT}([S], [\xi])=0$ if $S \ge \xi$.


\begin{algorithm} \label{protocol-discrete-exponential-distributed}
\SetKwInOut{Input}{input}\SetKwInOut{Output}{output}
\Input{Each party $P_i$ secretly shares his input $x_i$ among the parties}
\Output{The parties obtain a random variate $[X]$ on $ \mathcal R$ with probability mass function $\exp(\frac{\epsilon u(x,R)}{2\Delta u})$, where $x = (x_1, \ldots, x_n)$ and $\mathcal R = \{ 1,2,\ldots, |\mathcal R| \}$}
The parties compute $[p_{i}] \leftarrow [\exp(\frac{\epsilon u(x,i)}{2\Delta u})]$ for each $i \in \mathcal R$\;
The parties generate a random variate $[U]$ drawn from $U(0,1)$ by using Protocol \ref{protocol-uniform-distributed}\;
The parties compute $[\xi] \leftarrow [U]\times [\sum_{i \in \mathcal R} p_i]$\;
The parties set $[X] \gets 1$, $[S] \gets [p_1]$\;
\For{$k :=2$ \textbf{to} $|\mathcal R|$} {
    The parties compute $[X] \gets [X] + \mathrm{LT}([S], [\xi])$\;
    The parties compute $[S] \gets [S] + [p_i]$\;
}
The parties output $[X]$.
\caption{Multiparty discrete Exponential mechanism}
\end{algorithm}

When $\mathcal R $ is a set of high dimensional continuous random vectors, we can use the Gibbs sampling algorithm.


Let $\mathcal R = \{(r_1, \ldots, r_{k}) : r_i \in \mathbb R \mathrm{\; for \;} 1 \le i \le k\}$. Setting $\alpha = \int_{r \in \mathcal R} \exp(\frac{\epsilon u(x,r)}{2\Delta u})dr$, then $f(r) = \frac{1}{\alpha}\exp(\frac{\epsilon u(x,r)}{2\Delta u})$ is a density function on $\mathcal R$. The Gibbs sampling method generates a Markov chain $\{R_m\}_{m \ge 0}$ with $f(r)$ as its stationary density. Let

\[ p_i(\cdot | r_{(i)}) = \frac{f(r_1, r_2, \ldots, r_{i-1}, \cdot, r_{i+1}, \ldots, r_{k})}{ \int_{x \in \mathbb R} f(r_1, r_2, \ldots, r_{i-1}, x, r_{i+1}, \ldots, r_{k})dx } . \]

Note that $ p_i(\cdot | r_{(i)})$ is a univariate density function. Protocol \ref{protocol-gibbs-sampler} outputs a random vector $R$ drawn from the density $f(r)$, which uses the multiparty edition of Algorithm \ref{algorithm-Gibbs-sampling}.


\begin{algorithm} \label{protocol-gibbs-sampler}
\SetKwInOut{Input}{input}\SetKwInOut{Output}{output}
\Input{Each party $P_i$ secretly shares his input $x_i$ among the parties; The parties obtain the initial values $[R_{0j}] \gets [r_{0j}] , j= 1,2, \ldots, k-1$}
\Output{A random vector $R$ drawn from $f(r)$}
The parties generate a random variate $[R_{0k}]$ from the conditional density $p_k(\cdot | X_{\ell} = r_{0\ell}, \ell= 1,2, \ldots, k-1)$\;
\For{$i :=1$ \textbf{to} $m$} {
    \For{$j :=1$ \textbf{to} $k$}{
    The parties generate a random variate $[R_{ij}]$ from the conditional density $ p_j(\cdot | X_{\ell} = s_{i \ell}, \ell \in \{ 1, \ldots, k \}\setminus \{j\})$ by using Protocol \ref{protocol-inversion-distributed}, where $s_{i \ell}= r_{i \ell}$ for $1 \le \ell <j$ and $ s_{i \ell}= r_{(i-1) \ell}$ for $j < \ell \le k$;
    }
}
The parties output the random vector $[R_m] = ([R_{m1}], \ldots, [R_{mk}])$.
\caption{The multiparty high dimensional exponential mechanism}
\end{algorithm}

\subsection{Security and Privacy Analysis}

The security of the protocols in Section \ref{sec-random-generation-protocols} can be analyzed by Lemma \ref{lemma-reducing}. Given a randomized function $\mathcal M$, we first select $s' \sim F$, and deterministically compute $\mathcal M(x,s')$. Let $f$ be a deterministic function satisfying Equation \ref{equation-oracle-adid-protocol}. By Lemma \ref{lemma-reducing}, if there is a protocol $\pi$ privately computing $f$ and that $\pi(\cdot, s)$ is oblivious to $s \sim F$, then $\pi(x, \cdot)$ is secure to compute $\mathcal M(x)$ by inputting the seed $s \sim F$.

Since the paper focuses on computing randomized functions in the distributed setting and in order to keep the readability, the protocols in the section are presented in an algorithmic manner but not explicitly presented in the mathematical operations like \cite{DBLP:conf/acsac/EignerMPPK14}. The involved sub-protocols to compute some fundamental operations, e.g., addition, multiplication, $\mathrm{XOR}$, comparison and exponentiation etc., and the sub-protocols to compute some fundamental algorithms, e.g., the bisection method and the composite trapezoidal method, can be achieved by the works in \cite{DBLP:conf/ndss/AliasgariBZS13,DBLP:conf/fc/CatrinaS10,DBLP:conf/acsac/EignerMPPK14}, which would be as one future work. Therefore, in the paper, we assume that each deterministic function can be privately computed. Hence, to prove the security of the protocols in Section \ref{sec-random-generation-protocols}, we only need to prove the \emph{correctness} and \emph{obliviousness} of these protocols.

\subsubsection{Semi-Honest Model}

\emph{Obliviousness:} All the protocols in Section \ref{sec-random-generation-protocols} satisfy the property of obliviousness. This is because of seeds needed in these protocols are all input through invoking Protocol \ref{protocol-Bernoulli-distributed}. However, it can be easily verified that Protocol \ref{protocol-Bernoulli-distributed} satisfies the property of obliviousness. Therefore, other protocols inherit the obliviousness of Protocol \ref{protocol-Bernoulli-distributed}.

\emph{Correctness:} Protocol \ref{protocol-Bernoulli-distributed} is due to the fact that the $\mathrm{XOR}$ of two Bernoulli $\mathrm{Bern}(1/2)$ random variates is also a Bernoulli $\mathrm{Bern}(1/2)$ random variate. Therefore, $\oplus_{i=1}^n b_i$ is a $\mathrm{Bern}(1/2)$ random variate since each $b_i$ is a $\mathrm{Bern}(1/2)$ random variate. The correctness of Protocol \ref{protocol-Gaussian-distributed} is due to the central limit theorem \cite[Corollary 11.1.3]{athreya2006measure}. The correctness of Protocol \ref{protocol-uniform-distributed} is due to Theorem \ref{theorem-inversion}: If the random variate $\xi$ is drawn from  $\mathcal N(0,1)$, then $G(\xi) $ is drawn from $U(0,1)$, where $G(x) = \frac{1}{\sqrt{2\pi}} \int_{-\infty}^{x} \exp \left( -\frac{t^2}{2} \right) dt$ is the distribution function of $\mathcal N(0,1)$. The correctness of Protocol \ref{protocol-inversion-distributed} is due to the classical inversion method, i.e., Algorithm \ref{algorithm-inversion}. In Protocol \ref{protocol-Gaussian-mechanism-distributed}, Step 2 to Step 5 generate a random variate $X' \sim \mathcal{N}(0, \sigma^2)$ by using the central limit theorem \cite[Corollary 11.1.3]{athreya2006measure}. Then $f_j(x) + X'_j \sim \mathcal{N}(f_j(x),\sigma^2)$. Protocol \ref{protocol-laplace-mechanism} is due to the Laplace mechanism in Definition \ref{definition-laplace-gaussian-exponential}. Protocol \ref{protocol-discrete-exponential-distributed} is due to the sequential search algorithm in \cite[page 85]{Devroye86non-uniformrandom}. The correctness of Protocol \ref{protocol-gibbs-sampler} uses the correctness of Algorithm \ref{algorithm-Gibbs-sampling}.

\begin{corollary}
Protocol \ref{protocol-Gaussian-mechanism-distributed} is $(\epsilon,\delta)$-differentially private. Protocol \ref{protocol-laplace-mechanism}, Protocol \ref{protocol-discrete-exponential-distributed} and Protocol \ref{protocol-gibbs-sampler} are all $\epsilon$-differentially private.
\end{corollary}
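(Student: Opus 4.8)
The plan is to reduce the corollary to the two master results already established, namely Theorem \ref{theorem-obliviousness-mpc-security} (obliviousness together with secure computation of the underlying deterministic function yields private computation of the randomized mechanism) and Theorem \ref{theorem-mdp} (private computation transfers differential privacy from a mechanism to the protocol computing it). Concretely, for each of the four protocols I would establish three things in turn: that the protocol correctly realizes the corresponding mechanism of Definition \ref{definition-laplace-gaussian-exponential}, that it satisfies obliviousness, and that the deterministic function it invokes is privately computable. The conclusion then falls out by composing the two theorems and reading off the parameters from Definition \ref{definition-laplace-gaussian-exponential}.

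For the correctness and obliviousness steps I would lean on the analysis already given above. Obliviousness is uniform across all four protocols: every random seed enters only through calls to Protocol \ref{protocol-Bernoulli-distributed}, whose single unbiased coin flip per party is manifestly independent of the protocol output, so by the discussion following Lemma \ref{lemma-obliviousness-contrary} each protocol inherits obliviousness from Protocol \ref{protocol-Bernoulli-distributed}. Correctness is protocol-specific and would be cited case by case: Protocol \ref{protocol-Gaussian-mechanism-distributed} rests on the central limit theorem, Protocol \ref{protocol-laplace-mechanism} and Protocol \ref{protocol-discrete-exponential-distributed} on the inversion and sequential-search identities underlying Protocol \ref{protocol-inversion-distributed} and Algorithm \ref{algorithm-inversion}, and Protocol \ref{protocol-gibbs-sampler} on the correctness of the Gibbs sampler Algorithm \ref{algorithm-Gibbs-sampling}.

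Granting the paper's standing assumption that each deterministic building block (addition, multiplication, comparison, exponentiation, bisection, trapezoidal integration) can be privately computed, Theorem \ref{theorem-obliviousness-mpc-security} then yields that each protocol $\pi(x,\cdot)$ privately computes the randomized mechanism $\mathcal M(x)$ it is designed to implement. Feeding this into Theorem \ref{theorem-mdp} gives that each protocol is differentially private with precisely the parameters of the mechanism it computes. Invoking Definition \ref{definition-laplace-gaussian-exponential} — the Gaussian mechanism is $(\epsilon,\delta)$-differentially private, while the Laplace and exponential mechanisms are $\epsilon$-differentially private — pins down the four stated parameter regimes and completes the argument.

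The main obstacle I anticipate is that the \emph{correctness} step is only approximate rather than exact. Protocol \ref{protocol-Gaussian-distributed} produces a sum of Bernoulli variates that is merely asymptotically Gaussian by the central limit theorem, and Protocol \ref{protocol-uniform-distributed} (hence Protocol \ref{protocol-inversion-distributed}) evaluates the Gaussian distribution function and its inverse through the composite trapezoidal and bisection methods, both of which incur numerical error. Thus the realized output distribution is not literally that of the target mechanism, and the indistinguishability required by Theorem \ref{theorem-mdp} holds only up to this approximation error. I would therefore need to argue that, for a suitable choice of the parameter $k$ and the step length $h$, the generated distribution is statistically indistinguishable from the exact mechanism, so that the differential privacy guarantee is preserved up to a negligible additive slack — a subtlety that the clean statement of the corollary leaves implicit.
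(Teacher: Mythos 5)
Your proposal is correct and follows essentially the same route as the paper: the paper's proof is the single line ``direct corollary of Theorem \ref{theorem-mdp}'', leaning implicitly on the preceding security-analysis subsection that establishes obliviousness (via Protocol \ref{protocol-Bernoulli-distributed}) and correctness of each protocol, exactly the chain you spell out. Your closing caveat about the approximation error from the central limit theorem and the numerical methods is a genuine subtlety that the paper silently ignores, but it does not change the fact that your argument and the paper's are the same.
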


\begin{proof}
This is a direct corollary of Theorem \ref{theorem-mdp}.
\end{proof}

\subsubsection{Malicious Model}

By forcing parties to behave in an effectively semi-honest manner, we can transform the above protocols in the semi-honest model into protocols secure in the malicious-behavior model. The above process needs some preliminaries: the commitment schemes, zero-knowledge proof techniques and the Verifiable Secret Sharing (VSS) scheme. In the paper we do not intend to give it a detailed construction but as a future work. Besides of these, we consider the malicious behavior in computing seeds. Seeds are generated bit by bit by invoking Protocol \ref{protocol-Bernoulli-distributed}, in which a malicious party may input either a non-bit random element or a non-uniform random bit. The first malicious behavior can be avoided by verifying the input $x$ satisfies $x^2=x$. The second malicious behavior can be solved by first generating a public random variate drawn from $\mathrm{Bern}(1/2)$ and then $\mathrm{XOR}$ it with the output of Protocol \ref{protocol-Bernoulli-distributed} by the fact that the $\mathrm{XOR}$ of two random bits is uniform so long as one of which is uniform.

\subsection{Optimal Complexity}

By Section \ref{sec-randomized-security}, each party $P_1$ should input a seed $s_i$, a random variate, to the protocol $\pi$ for computing a randomized function $\mathcal M$ to generate the randomness of the final output. We call $s=(s_1, \ldots, s_n)$ a seed of $\pi$ for computing $\mathcal M$ and call the number of bit in $s$, denoted $|s|$, the length of $s$. Each protocol in Section \ref{sec-random-generation-protocols} takes independent random bit sequence as its seed. Note that the length of the seed is an important indicator of the complexity of the protocol, the minimum length of the seed is of special interest.

We now discuss the minimum length of the seeds of all the protocols for generating independent random bits.

\begin{theorem} \label{theorem-independent-bits-optimality}
Let $\pi$ be a protocol to privately compute the randomized function $\mathcal M$ of generating random vector $v=(v_1, \ldots, v_k)$, where $v_1, \ldots, v_k \sim_{i.i.d} \mathrm{Bern}(1/2)$. Let $s_i= (s_{i1}, \ldots, s_{i\ell_i})$ be the seed of $P_i$, where each $s_{ij}$ denotes a bit. Then $\ell_i \ge k$ for $1 \le i \le n$.

Therefore, the protocol $\pi'$ of independent $k$ times execution of Protocol \ref{protocol-Bernoulli-distributed} has the shortest seed among all the protocols for privately computing $\mathcal M$.
\end{theorem}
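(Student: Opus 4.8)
The plan is to combine obliviousness with a counting argument on the support of the output. First I would observe that, since $\pi$ privately computes $\mathcal M$, the contrapositive of Lemma \ref{lemma-obliviousness-contrary} forces the seed $s=(s_1,\ldots,s_n)$ to be oblivious to $\pi$ (this is also the ``only if'' direction of Theorem \ref{theorem-obliviousness-mpc-security}). I would then record two consequences of the correctness of $\pi$: the output $\mathrm{OUTPUT}^{\pi}$ is distributed exactly as $\mathcal M$, i.e.\ uniformly on $\{0,1\}^k$, so its support has cardinality $2^k$; and, since under the modelling of Section \ref{sec-randomized-security} all randomness of $\pi$ is supplied by the seeds, $\mathrm{OUTPUT}^{\pi}$ is a deterministic function $\Phi(s_1,\ldots,s_n)$ of the seed.

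Next, I would fix a party index $i$ and apply obliviousness to the proper subset $I=\{1,\ldots,n\}\setminus\{i\}$. For any admissible assignment $s_{-i}'$ of the other parties' seeds, obliviousness yields
\[ \{ \mathrm{OUTPUT}^{\pi}(x,s)\mid s_{-i}=s_{-i}' \}_x \equiv \{ \mathrm{OUTPUT}^{\pi}(x,s) \}_x, \]
so the conditional output is still uniform on $\{0,1\}^k$ and hence still has support of size $2^k$. But once $s_{-i}$ is frozen to $s_{-i}'$, the only free part of the seed left is $s_i$, and the output reduces to the single-argument map $s_i\mapsto \Phi(s_i,s_{-i}')$ from $\{0,1\}^{\ell_i}$ to $\{0,1\}^k$. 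For the image of this map to contain all $2^k$ points of the support, its domain must have at least $2^k$ elements, whence $2^{\ell_i}\ge 2^k$, i.e.\ $\ell_i\ge k$. As $i$ is arbitrary this gives the bound for every party. The degenerate case $n=1$ needs no obliviousness at all: the output is uniform on $\{0,1\}^k$ and is a function of $s_1$ alone, so the same surjectivity argument gives $\ell_1\ge k$.

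For the optimality claim I would exhibit the matching upper bound. The protocol $\pi'$ consisting of $k$ independent executions of Protocol \ref{protocol-Bernoulli-distributed} has each party contribute exactly one bit per execution, hence a seed of length $\ell_i=k$; this saturates the bound $\ell_i\ge k$ just proved, so no protocol privately computing $\mathcal M$ can have a strictly shorter seed for any party, and in particular none can have smaller total length $\sum_i\ell_i$. Therefore $\pi'$ has the shortest seed among all such protocols.

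I expect the main obstacle to be the justification that, after conditioning on $s_{-i}=s_{-i}'$, the output genuinely collapses to a deterministic function of $s_i$ alone with no residual randomness; this rests on the convention that every bit of randomness used by $\pi$ is carried by the seeds, via the reduction $f((x_1,s_1),\ldots,(x_n,s_n))=\mathcal M(x,g(s))$ of Equation \ref{equation-oracle-adid-protocol}. Once that is granted, the surjectivity step is routine, and it is worth noting that the argument does not even require the seed bits to be uniform: all that is used is that $s_i$ ranges over a set of size at most $2^{\ell_i}$ whose image must cover the whole support $\{0,1\}^k$.
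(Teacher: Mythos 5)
Your proposal is correct and follows essentially the same route as the paper: invoke obliviousness to freeze the other parties' seeds while preserving the output distribution, observe that the output then becomes a deterministic map from $\{0,1\}^{\ell_i}$ to $\{0,1\}^k$, and conclude $\ell_i\ge k$ by comparing the image size to the $2^k$-point support of the uniform distribution (the paper states this as the contrapositive, exhibiting a string outside the image with probability zero). The optimality step matches the paper's as well.
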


\begin{proof}
Let notations be denoted as Equation \ref{equation-oracle-adid-protocol} and Definition \ref{definition-independence-auxiliary-input}. Since $\pi$ should satisfy obliviousness, then for each $i \in \{1,\ldots, n\}$ and each admissible value $s_{\bar \imath}'$ of $s_{\bar \imath}$, we have
    \[ \{ \mathrm{OUTPUT}^{\pi}(x,s)|s_{\bar \imath} =s_{\bar \imath}' \}_x \equiv \{ \mathrm{OUTPUT}^{\pi}(x,s) \}_x, \]
where $s_{\bar \imath} = (s_1, \ldots, s_{i-1}, s_{i+1}, \ldots, s_n)$.

For the $(n+1)$-ary deterministic function $\mathcal M(x, s)$, let $g_i(s_i): = \mathcal M(x, s|x=x',s_{\bar \imath} =s_{\bar \imath}')$ denote the univariate deterministic function about $s_i$ when $x=x', s_{\bar \imath} =s_{\bar \imath}'$. Then $g_i: \{ 0,1\}^{\ell_i} \rightarrow \{ 0,1\}^k$. Assume that $\ell_i < k$ and, without loss of generality, set $\ell_i =k-1$. Set $S= \{g_i(y): y \in  \{ 0,1\}^{k-1} \}$. Since $|S| \le 2^{k-1} < 2^{k}$, there would have at lease $2^{k-1}$ elements of $\{ 0,1\}^k$ not contained in $S$. Letting $g_i(s_i) = (Y_1\cdots Y_{k})$, where $Y_1, \ldots, Y_k \sim_{i.i.d} \mathrm{Bern}(1/2)$, for any $k$-bit sequence $y_1\cdots y_{k} \notin S$, we have 
    \[ \prod_{j=1}^{k} \Pr[Y_j=y_j] = \Pr[Y_1\cdots Y_{k} =  y_1\cdots y_{k}] =0.\]
Therefore, there exists one $j \in \{1, \ldots, k\}$, such that $\Pr[Y_j=y_j]=0$, which is contrary to the assumption that $(Y_1\cdots Y_{k})$ are random bits. Therefore, $\ell_i \ge k$ for all $1 \le i \le n$.

On the other hand, the length of the seed of $\pi'$ is $nk$. Therefore, it has the shortest seed among all the protocols privately compute $\mathcal M$.

The claim is proved. 
\end{proof}

Theorem \ref{theorem-independent-bits-optimality} shows one intrinsic bound on optimizing the complexity of those protocols for computing randomized functions by invoking Protocol \ref{protocol-Bernoulli-distributed} and shows that our protocols in the section reach the bound.

\subsection{Application to Empirical Risk Minimization} \label{sec-application}

Our protocols are fundamental and powerful to construct other complex differentially private protocols. We now use our protocols to construct a differentially private empirical risk minimization (ERM) protocol in the distributed setting.

We consider a differentially private (ERM) algorithm \cite[Algorithm 1]{DBLP:journals/jmlr/ChaudhuriMS11}. For Algorithm 1 in \cite{DBLP:journals/jmlr/ChaudhuriMS11}, we can add a noise vector to the output of $\arg\min_f J(f,\mathcal{D})$  in order to achieve differential privacy, where
\[ J(f, \mathcal{D}) = \frac{1}{k} \sum_{i=1}^k \ell(f(x_i),y_i) + \Lambda N(f). \]
If the added noise vector $\textbf b$ is drawn from $\frac{1}{\alpha}\exp(-\lambda||\textbf b||)$ the output satisfies differential privacy, where $\lambda =\frac{2}{k\Lambda\epsilon}$.

In the distributed setting, let the dataset $\mathcal D=\{ (x_j,y_j)\} $ is partitioned into $n$ parts $\mathcal D_1, \ldots, \mathcal D_n$, where the party $P_i$ owns $\mathcal D_i$. Each party $P_i$ first shares its dataset $\mathcal D_i$ among the parties. Then the parties approximately compute a share $[f]$ of the minimizer of $J(f, ([\mathcal D_1], \ldots, [\mathcal D_n]))$ by using a deterministic function evaluation protocol. (Since the paper focuses on randomized function evaluation protocols, we omit to construct the protocol of computing $[f]$.) The parties now use Protocol \ref{protocol-n-gamma-distributed} to generate a random vector $[X]$ drawn from $\frac{1}{\alpha}\exp(-\lambda||\textbf b||)$, where Protocol \ref{protocol-n-gamma-distributed} is a multiparty edition of the polar method in \cite[page 225]{Devroye86non-uniformrandom}. The parties then compute $[X+f]$. Finally, the parties recover and output $X+f$, which would be a differentially private ERM in the distributed setting.

\begin{algorithm} \label{protocol-n-gamma-distributed}
\SetKwInOut{Input}{input}\SetKwInOut{Output}{output}
\Input{None}
\Output{A random variate $[X]$ drawn from $ f (x_1,\ldots,x_d)=\frac{1}{\alpha}e^{-\lambda \sqrt{x_1^2+ \ldots + x_d^2}}$ }
    The parties generate i.i.d normal randoms $[N_1], \ldots, [N_d]$ by Protocol \ref{protocol-Gaussian-distributed}\;
    The parties compute a share of random vector $[X'] \leftarrow ([\frac{N_1}{S}], \ldots, [\frac{N_d}{S}])$, where $S \leftarrow \sqrt{N_1^2 + \cdots + N_d^2}$ \;
The parties generate a random variate $[R]$ drawn from the density $dV_dr^{d-1}g(r)$ $(r \ge 0)$ by using Protocol \ref{protocol-inversion-distributed}, where $V_d = \frac{\pi^{d/2}}{\Gamma(d/2+1)}$ and $g(x) = \frac{1}{\alpha}e^{-\lambda x}$  \;
The parties compute $ [X] \gets [RX']$.
\caption{Multiparty generation of random variate drawn from $ f (x_1,\ldots,x_d) = \frac{1}{\alpha}e^{-\lambda \sqrt{x_1^2+ \ldots + x_d^2}}$}
\end{algorithm}

\section{Related Work}  \label{sec-related-work}

\emph{Secure multiparty computation} \cite{DBLP:conf/ndss/AliasgariBZS13,DBLP:conf/fc/CatrinaS10,DBLP:series/isc/HazayL10,DBLP:books/cu/Goldreich2004} studies how to privately compute functions in the distributed setting. The computation of randomized function, such as random variate generation, is seldom studied in MPC. Until recently, the development of DP in the distributed setting makes the study of the computation of randomized functions necessary in MPC. Except the works mentioned in Section \ref{sec-introduction}, other former works are presented as follows.  

Proposition 7.3.4 in \cite{DBLP:books/cu/Goldreich2004} privately reduces computing randomized function to a deterministic one. However, it does not give criterion about what kind of seed, which is used to generates the randomness, is secure. That is, the criterion for how to determine a protocol computing a randomized function is secure is not given. Our conclusion gives a sufficient and necessary condition (Theorem \ref{theorem-obliviousness-mpc-security}) about it and therefore gives the criterion, i.e., obliviousness. Furthermore, obliviousness gives some clue on finding more (efficient) reduction protocols except the one in \cite[Proposition 7.3.4]{DBLP:books/cu/Goldreich2004}. Note also that the randomized functions the paper considers are confided to be $n$-ary functions having the same value for all components.

The notion of \emph{obliviousness} can be seen as a (no trivial) generation of the notion \emph{Obliviously} in \cite{DBLP:conf/ccs/BunnO07}. However, they have one major difference: \emph{Obliviously} emphasises on the independence of the seed to the execution of the protocol computing the randomized function, where as \emph{obliviousness} focus on the independence of the seed to the output of the protocol computing the deterministic function, to which the randomized function is privately reduced. The advantage of the later is that it separates the choosing of the seed from the execution of the protocol computing the deterministic function, which makes the design and the analysis of the protocol computing randomized function easy to do.




\cite{DBLP:conf/eurocrypt/DworkKMMN06} gives two protocols to generate Gaussian random variate and Laplace random variate in the distributed setting, which are used to compute differentially private summation functions. Although Protocol \ref{protocol-Gaussian-distributed} in our paper is similar with the one in \cite{DBLP:conf/eurocrypt/DworkKMMN06} to generate Gaussian random variate, our work focus mainly on the fundamental theory and fundamental tools to compute randomized functions in the distributed setting and is therefore different from theirs.

Random Value Protocol \cite{DBLP:conf/ccs/BunnO07} is a two-party protocol to generate uniform random integers from $\mathbb Z_N$ while keeping $N$ secret, which is used to approximately generate uniform random variate \cite{DBLP:journals/tdsc/MohammedAFD14,DBLP:conf/pet/AlhadidiMFD12} following $U(0,1)$ in the two-party setting. It satisfies obliviousness but is too complicated that we can not see a way to extend it to a multiparty one. Furthermore, the distributed exponential mechanism protocols in \cite{DBLP:journals/tdsc/MohammedAFD14,DBLP:conf/pet/AlhadidiMFD12} are two special instantiations of Protocol \ref{protocol-discrete-exponential-distributed}.

\cite{DBLP:conf/acsac/EignerMPPK14} presents a protocol to implement exponential mechanism, in which a sub-protocol is needed to generate uniform random variate drawn from the uniform distribution $U(0,1)$. In order to generate such uniform random variate, the parties first secretly generate a uniform $(\gamma+1)$-bit integer using the protocol $\mathrm{RandInt}(\gamma+1)$. Then this integer is considered to be fractional part of fixed point number, whose integer part is 0. Afterwards, the fixed point number is converted to floating point by a secure protocol, which is output as the final result. The above protocol to generate uniform random number has two drawbacks. First, the invoked protocol $\mathrm{RandInt}(\gamma+1)$, borrowed from \cite{damgaard2006unconditionally,DBLP:conf/tcc/CramerDI05}, generates a uniform random element in $\mathbb Z_p$ by the modular sum of the uniform random elements in $\mathbb Z_p$ generated by each of the parties. Note that the modular sum of two uniform random elements in $\mathbb Z_p$ is, in general, not a uniform random elements in $\mathbb Z_p$ \cite{DBLP:conf/ccs/BunnO07}. Therefore, $\mathrm{RandInt}(\gamma+1)$ (most probably) generates a non-uniform random $(\gamma+1)$-bit integer, which in turn leads to the non-uniformity of the one in \cite{DBLP:conf/acsac/EignerMPPK14}. Second, since $\gamma$ is predetermined, the random number generated may not get value from many sub-intervals of $[0,1]$, such as the sub-interval $(0, 2^{-\gamma-1})$. Therefore, strictly speaking, the above method may not generate a random number with range $[0,1]$. Of course, the uniform property in the range $[0,1]$ of the generated random number will be not satisfied.

\cite{DBLP:conf/crypto/GoyalMPS13,DBLP:conf/focs/McGregorMPRTV10} studies the accuracy difference in computing Boolean functions between the client-server setting and the distributed setting. \cite{DBLP:conf/crypto/MironovPRV09} introduces the notion of computational differential privacy in the two-party setting. \cite{DBLP:conf/crypto/BeimelNO08} studies the influence to the accuracy of computing binary sum, gap threshold etc., when both of differentially private analyses and the construction of protocol are considered simultaneously, which is contrary to the paradigm we use in which we first analyze a problem using differentially private algorithm and then construct corresponding protocol to compute it.



\emph{Differential privacy} is a rigorous and promising privacy model. Much works have been done in differentially private data analysis \cite{DBLP:journals/jmlr/ChaudhuriSS13,DBLP:conf/kdd/McSherryM09,DBLP:conf/stoc/DworkTT014,DBLP:journals/jmlr/ChaudhuriMS11,DBLP:conf/nips/ChaudhuriV13,DBLP:conf/nips/ChaudhuriM08,DBLP:conf/icml/0002T13,DBLP:journals/pvldb/ZhangZXYW12,DBLP:conf/sigmod/ZhangCPSX14}. Our work tries to extend these algorithms to the distributed setting. It constructs fundamental theory, such as Theorem \ref{theorem-mdp} and Theorem \ref{theorem-composition-DP}, and fundamental tools, such as the protocols in Section \ref{sec-random-generation-protocols}, about it.

\emph{Non-uniform Random variate generation} \cite{Devroye86non-uniformrandom} is a well developed field in computer science and statistics. It studies how to generate non-uniform random variate drawn from the prescribed distribution function. Some work of the paper studies secure random variate generation in the distributed setting. It redesigns the traditional random variate generation protocols to adapt to the distributed setting. Note that most powerful algorithms, such as the rejection method, are not fit for the distributed setting.

\section{Conclusion and Future Work} \label{sec-conclusion}

The paper tried to answer in what condition can a protocol inherit the differential privacy property of a function it computes and how to construct such protocol. We proved that the differential privacy property of a function can be inherited by the protocol computing it if the protocol privately computes it. Then a theorem provided the sufficient and necessary condition of privately computing a randomized function (and so differentially private function) from a deterministic one.  The above result can not only be used to determine whether a protocol computing differentially private function is secure, but also be used to construct secure one. In obtaining these results, the notion \emph{obliviousness} plays a vital role, which captures the key security problems when computing a randomized function from a deterministic one in the distributed setting. However, we can not prove the assertion that a protocol can not inherit the differential privacy property of the function it computes if the protocol does not satisfy obliviousness. We tend to a negative answer to the assertion.

The theoretical results in Section \ref{sec-randomized-security} and Section \ref{sec-mdp} is fundamental and powerful to multiparty differential privacy. By using these results, some fundamental differentially private protocols, such as protocols for Gaussian mechanism, Laplace mechanism and Exponential mechanism, are constructed in Section \ref{sec-random-generation-protocols}. By using these fundamental protocols, differentially private protocols for many complex problems, such as the empirical risk minimization problem, can be constructed with little effort. Therefore, our results can be seen as a foundation and a pool of necessary tools for multiparty differential privacy.

Furthermore, obliviousness is of independent interest to MPC. The deep meaning of it in the security of computing randomized function is still needed to be explored. Theorem \ref{theorem-independent-bits-optimality} shows the intrinsic complexity of the method the paper use to achieve obliviousness, i.e., bits $\mathrm{XOR}$. Finding other efficient method to achieve obliviousness is therefore an important topic to reduce protocols' complexity.

\section*{Acknowledgment}

The research is supported by the following fund: National Science and Technology Major Project under Grant No.2012ZX01039-004.

\bibliography{Numerical-methods,random-number-generation,OS-privacy,algebra,zero-knowledge-proof,secure-multi-com,proximity-test,location-social-network,location-proof,location-access-control-privacy,location-privacy,anonymity-metrics,information-theory-anonymity,vehicle-ad-hoc-network,rfid-sec-privacy,sensor-adhoc-network,anonymous-attestation-blacklist,differential-privacy}

\begin{thebibliography}{10}

\bibitem{DBLP:conf/nips/PathakRR10}
Manas~A. Pathak, Shantanu Rane, and Bhiksha Raj.
\newblock Multiparty differential privacy via aggregation of locally trained
  classifiers.
\newblock In {\em Advances in Neural Information Processing Systems 23: 24th
  Annual Conference on Neural Information Processing Systems 2010. Proceedings
  of a meeting held 6-9 December 2010, Vancouver, British Columbia, Canada.},
  pages 1876--1884, 2010.

\bibitem{DBLP:conf/kdd/GantaKS08}
Srivatsava~Ranjit Ganta, Shiva~Prasad Kasiviswanathan, and Adam Smith.
\newblock Composition attacks and auxiliary information in data privacy.
\newblock In {\em Proceedings of the 14th {ACM} {SIGKDD} International
  Conference on Knowledge Discovery and Data Mining, Las Vegas, Nevada, USA,
  August 24-27, 2008}, pages 265--273, 2008.

\bibitem{DBLP:conf/sp/NarayananS08}
Arvind Narayanan and Vitaly Shmatikov.
\newblock Robust de-anonymization of large sparse datasets.
\newblock In {\em 2008 {IEEE} Symposium on Security and Privacy (S{\&}P 2008),
  18-21 May 2008, Oakland, California, {USA}}, pages 111--125, 2008.

\bibitem{DBLP:books/cu/Goldreich2004}
Oded Goldreich.
\newblock {\em The Foundations of Cryptography - Volume 2, Basic Applications}.
\newblock Cambridge University Press, 2004.

\bibitem{DBLP:conf/crypto/BeimelNO08}
Amos Beimel, Kobbi Nissim, and Eran Omri.
\newblock Distributed private data analysis: Simultaneously solving how and
  what.
\newblock In {\em Advances in Cryptology - {CRYPTO} 2008, 28th Annual
  International Cryptology Conference, Santa Barbara, CA, USA, August 17-21,
  2008. Proceedings}, pages 451--468, 2008.

\bibitem{DBLP:conf/pldi/MardzielHKS12}
Piotr Mardziel, Michael Hicks, Jonathan Katz, and Mudhakar Srivatsa.
\newblock Knowledge-oriented secure multiparty computation.
\newblock In {\em Proceedings of the 2012 Workshop on Programming Languages and
  Analysis for Security, {PLAS} 2012, Beijing, China, 15 June, 2012}, page~2,
  2012.

\bibitem{DBLP:journals/fttcs/DworkR14}
Cynthia Dwork and Aaron Roth.
\newblock The algorithmic foundations of differential privacy.
\newblock {\em Foundations and Trends in Theoretical Computer Science},
  9(3-4):211--407, 2014.

\bibitem{DBLP:conf/icalp/Dwork06}
Cynthia Dwork.
\newblock Differential privacy.
\newblock In {\em ICALP (2)}, pages 1--12, 2006.

\bibitem{DBLP:conf/focs/McSherryT07}
Frank McSherry and Kunal Talwar.
\newblock Mechanism design via differential privacy.
\newblock In {\em FOCS}, pages 94--103, 2007.

\bibitem{DBLP:conf/sigmod/RastogiN10}
Vibhor Rastogi and Suman Nath.
\newblock Differentially private aggregation of distributed time-series with
  transformation and encryption.
\newblock In {\em SIGMOD Conference}, pages 735--746, 2010.

\bibitem{DBLP:journals/tdsc/MohammedAFD14}
Noman Mohammed, Dima Alhadidi, Benjamin C.~M. Fung, and Mourad Debbabi.
\newblock Secure two-party differentially private data release for vertically
  partitioned data.
\newblock {\em IEEE Trans. Dependable Sec. Comput.}, 11(1):59--71, 2014.

\bibitem{7286780}
S.~Goryczka and L.~Xiong.
\newblock A comprehensive comparison of multiparty secure additions with
  differential privacy.
\newblock {\em IEEE Transactions on Dependable and Secure Computing},
  PP(99):1--1, 2015.

\bibitem{DBLP:conf/fc/CatrinaS10}
Octavian Catrina and Amitabh Saxena.
\newblock Secure computation with fixed-point numbers.
\newblock In {\em Financial Cryptography and Data Security, 14th International
  Conference, {FC} 2010, Tenerife, Canary Islands, January 25-28, 2010, Revised
  Selected Papers}, pages 35--50, 2010.

\bibitem{DBLP:conf/ndss/AliasgariBZS13}
Mehrdad Aliasgari, Marina Blanton, Yihua Zhang, and Aaron Steele.
\newblock Secure computation on floating point numbers.
\newblock In {\em 20th Annual Network and Distributed System Security
  Symposium, {NDSS} 2013, San Diego, California, USA, February 24-27, 2013},
  2013.

\bibitem{DBLP:conf/acsac/EignerMPPK14}
Fabienne Eigner, Matteo Maffei, Ivan Pryvalov, Francesca Pampaloni, and Aniket
  Kate.
\newblock Differentially private data aggregation with optimal utility.
\newblock In {\em Proceedings of the 30th Annual Computer Security Applications
  Conference, {ACSAC} 2014, New Orleans, LA, USA, December 8-12, 2014}, pages
  316--325, 2014.

\bibitem{DBLP:conf/eurocrypt/CramerDM00}
Ronald Cramer, Ivan Damg{\aa}rd, and Ueli~M. Maurer.
\newblock General secure multi-party computation from any linear secret-sharing
  scheme.
\newblock In {\em Advances in Cryptology - {EUROCRYPT} 2000, International
  Conference on the Theory and Application of Cryptographic Techniques, Bruges,
  Belgium, May 14-18, 2000, Proceeding}, pages 316--334, 2000.

\bibitem{DBLP:conf/stoc/Ben-OrGW88}
Michael Ben{-}Or, Shafi Goldwasser, and Avi Wigderson.
\newblock Completeness theorems for non-cryptographic fault-tolerant
  distributed computation (extended abstract).
\newblock In {\em Proceedings of the 20th Annual {ACM} Symposium on Theory of
  Computing, May 2-4, 1988, Chicago, Illinois, {USA}}, pages 1--10, 1988.

\bibitem{DBLP:conf/sigmod/McSherry09}
Frank McSherry.
\newblock Privacy integrated queries: an extensible platform for
  privacy-preserving data analysis.
\newblock In {\em SIGMOD Conference}, pages 19--30, 2009.

\bibitem{Devroye86non-uniformrandom}
Luc Devroye.
\newblock {\em Non-Uniform Random Variate Generation}.
\newblock Springer-Verlag, Berlin, Heidelberg, New York, 1986.

\bibitem{athreya2006measure}
Krishna~B Athreya and Soumendra~N Lahiri.
\newblock {\em Measure theory and probability theory}.
\newblock Springer Science \& Business Media, 2006.

\bibitem{DBLP:series/isc/HazayL10}
Carmit Hazay and Yehuda Lindell.
\newblock {\em Efficient Secure Two-Party Protocols - Techniques and
  Constructions}.
\newblock Information Security and Cryptography. Springer, 2010.

\bibitem{DBLP:conf/ccs/BunnO07}
Paul Bunn and Rafail Ostrovsky.
\newblock Secure two-party k-means clustering.
\newblock In {\em Proceedings of the 2007 {ACM} Conference on Computer and
  Communications Security, {CCS} 2007, Alexandria, Virginia, USA, October
  28-31, 2007}, pages 486--497, 2007.

\bibitem{DBLP:conf/crypto/GoyalMPS13}
Vipul Goyal, Ilya Mironov, Omkant Pandey, and Amit Sahai.
\newblock Accuracy-privacy tradeoffs for two-party differentially private
  protocols.
\newblock In {\em Advances in Cryptology - {CRYPTO} 2013 - 33rd Annual
  Cryptology Conference, Santa Barbara, CA, USA, August 18-22, 2013.
  Proceedings, Part {I}}, pages 298--315, 2013.

\bibitem{ascher2011first}
Uri~M Ascher and Chen Greif.
\newblock {\em A First Course on Numerical Methods}, volume~7.
\newblock Siam, 2011.

\bibitem{DBLP:journals/jmlr/ChaudhuriMS11}
Kamalika Chaudhuri, Claire Monteleoni, and Anand~D. Sarwate.
\newblock Differentially private empirical risk minimization.
\newblock {\em Journal of Machine Learning Research}, 12:1069--1109, 2011.

\bibitem{DBLP:conf/eurocrypt/DworkKMMN06}
Cynthia Dwork, Krishnaram Kenthapadi, Frank McSherry, Ilya Mironov, and Moni
  Naor.
\newblock Our data, ourselves: Privacy via distributed noise generation.
\newblock In {\em Advances in Cryptology - {EUROCRYPT} 2006, 25th Annual
  International Conference on the Theory and Applications of Cryptographic
  Techniques, St. Petersburg, Russia, May 28 - June 1, 2006, Proceedings},
  pages 486--503, 2006.

\bibitem{DBLP:conf/pet/AlhadidiMFD12}
Dima Alhadidi, Noman Mohammed, Benjamin C.~M. Fung, and Mourad Debbabi.
\newblock Secure distributed framework for achieving $\epsilon$-differential
  privacy.
\newblock In {\em Privacy Enhancing Technologies}, pages 120--139, 2012.

\bibitem{damgaard2006unconditionally}
Ivan Damg{\aa}rd, Matthias Fitzi, Eike Kiltz, Jesper~Buus Nielsen, and Tomas
  Toft.
\newblock Unconditionally secure constant-rounds multi-party computation for
  equality, comparison, bits and exponentiation.
\newblock In {\em Theory of Cryptography}, pages 285--304. Springer, 2006.

\bibitem{DBLP:conf/tcc/CramerDI05}
Ronald Cramer, Ivan Damg{\aa}rd, and Yuval Ishai.
\newblock Share conversion, pseudorandom secret-sharing and applications to
  secure computation.
\newblock In {\em Theory of Cryptography, Second Theory of Cryptography
  Conference, {TCC} 2005, Cambridge, MA, USA, February 10-12, 2005,
  Proceedings}, pages 342--362, 2005.

\bibitem{DBLP:conf/focs/McGregorMPRTV10}
Andrew McGregor, Ilya Mironov, Toniann Pitassi, Omer Reingold, Kunal Talwar,
  and Salil~P. Vadhan.
\newblock The limits of two-party differential privacy.
\newblock In {\em FOCS}, pages 81--90, 2010.

\bibitem{DBLP:conf/crypto/MironovPRV09}
Ilya Mironov, Omkant Pandey, Omer Reingold, and Salil~P. Vadhan.
\newblock Computational differential privacy.
\newblock In {\em Advances in Cryptology - {CRYPTO} 2009, 29th Annual
  International Cryptology Conference, Santa Barbara, CA, USA, August 16-20,
  2009. Proceedings}, pages 126--142, 2009.

\bibitem{DBLP:journals/jmlr/ChaudhuriSS13}
Kamalika Chaudhuri, Anand~D. Sarwate, and Kaushik Sinha.
\newblock A near-optimal algorithm for differentially-private principal
  components.
\newblock {\em Journal of Machine Learning Research}, 14(1):2905--2943, 2013.

\bibitem{DBLP:conf/kdd/McSherryM09}
Frank McSherry and Ilya Mironov.
\newblock Differentially private recommender systems: Building privacy into the
  netflix prize contenders.
\newblock In {\em Proceedings of the 15th {ACM} {SIGKDD} International
  Conference on Knowledge Discovery and Data Mining, Paris, France, June 28 -
  July 1, 2009}, pages 627--636, 2009.

\bibitem{DBLP:conf/stoc/DworkTT014}
Cynthia Dwork, Kunal Talwar, Abhradeep Thakurta, and Li~Zhang.
\newblock Analyze gauss: optimal bounds for privacy-preserving principal
  component analysis.
\newblock In {\em Symposium on Theory of Computing, {STOC} 2014, New York, NY,
  USA, May 31 - June 03, 2014}, pages 11--20, 2014.

\bibitem{DBLP:conf/nips/ChaudhuriV13}
Kamalika Chaudhuri and Staal~A. Vinterbo.
\newblock A stability-based validation procedure for differentially private
  machine learning.
\newblock In {\em NIPS}, pages 2652--2660, 2013.

\bibitem{DBLP:conf/nips/ChaudhuriM08}
Kamalika Chaudhuri and Claire Monteleoni.
\newblock Privacy-preserving logistic regression.
\newblock In {\em Advances in Neural Information Processing Systems 21,
  Proceedings of the Twenty-Second Annual Conference on Neural Information
  Processing Systems, Vancouver, British Columbia, Canada, December 8-11,
  2008}, pages 289--296, 2008.

\bibitem{DBLP:conf/icml/0002T13}
Prateek Jain and Abhradeep Thakurta.
\newblock Differentially private learning with kernels.
\newblock In {\em Proceedings of the 30th International Conference on Machine
  Learning, {ICML} 2013, Atlanta, GA, USA, 16-21 June 2013}, pages 118--126,
  2013.

\bibitem{DBLP:journals/pvldb/ZhangZXYW12}
Jun Zhang, Zhenjie Zhang, Xiaokui Xiao, Yin Yang, and Marianne Winslett.
\newblock Functional mechanism: Regression analysis under differential privacy.
\newblock {\em {PVLDB}}, 5(11):1364--1375, 2012.

\bibitem{DBLP:conf/sigmod/ZhangCPSX14}
Jun Zhang, Graham Cormode, Cecilia~M. Procopiuc, Divesh Srivastava, and Xiaokui
  Xiao.
\newblock Privbayes: private data release via bayesian networks.
\newblock In {\em SIGMOD Conference}, pages 1423--1434, 2014.

\end{thebibliography}
\bibliographystyle{unsrt}

\end{document}